\documentclass[aps,prxquantum,twocolumn,amsmath,amssymb,superscriptaddress]{revtex4-2}
\usepackage{graphicx}
\usepackage{dcolumn}
\usepackage{bm}
\usepackage{hyperref}
\usepackage{subfigure}

\usepackage{epstopdf}
\usepackage{tikz}

\usepackage{subfigure}
\usepackage{dcolumn}
\usepackage{bm,bbm}  
\usepackage{amsmath}

\usepackage{amssymb}
\usepackage{mathrsfs}
\usepackage[latin1]{inputenc}
\usepackage{pstricks,pst-node,pst-text,pst-3d}

\usepackage[all]{xy}

\usepackage{setspace}
\usepackage{tikz}
\usetikzlibrary{shapes,arrows}
\usepackage{lipsum}
\usepackage{lmodern}
\usepackage{tcolorbox}

\usepackage{diagbox}
\usepackage[ ruled ]{algorithm2e}
\usepackage{algpseudocode}
\usepackage{textcomp}
\usepackage{graphicx}
\usepackage{subfigure}

\makeatletter
\newcommand\figcaption{\def\@captype{figure}\caption}
\newcommand\tabcaption{\def\@captype{table}\caption}
\makeatother

\newcommand{\tr}{{\rm Tr}}

\newtheorem{Assumption}{Assumption}
\newtheorem{Theorem}{Theorem}

\newtheorem{Corollary}{Corollary}
\newtheorem{Definition}{Definition}
\newtheorem{Conjecture}{Conjecture}

\newtheorem{Lemma}{Lemma}

\definecolor{maroon}{rgb}{0.7,0,0}

\definecolor{ngreen}{rgb}{0.3,0.7,0.3}

\definecolor{golden}{rgb}{0.8,0.6,0.1}

\begin{document}
	
\title{Optimization landscapes of variational quantum algorithms }
\author{Xiaozhen Ge}
\affiliation{Department of Automation, Tsinghua University, Beijing, 100084, China}
\affiliation{Department of Applied Mathematics, The Hong Kong Polytechnic University, kowloon, 999077, Hong Kong, China}
\affiliation{State Key Laboratory of Autonomous Intelligent Unmanned Systems, Shanghai Research Institute for Intelligent Autonomous Systems, Tongji University, Shanghai, 201203, China}

\author{Shuming Cheng}
\email{drshuming.cheng@gmail.com}
\affiliation{State Key Laboratory of Autonomous Intelligent Unmanned Systems, Shanghai Research Institute for Intelligent Autonomous Systems, Tongji University, Shanghai, 201203, China}
\affiliation{The Department of Control Science and Engineering, Tongji University, Shanghai 201804, China}

\author{Guofeng Zhang}
\email{guofeng.zhang@polyu.edu.hk}
\affiliation{Department of Applied Mathematics, The Hong Kong Polytechnic University, kowloon, 999077, Hong Kong, China}
\affiliation{Shenzhen Research Institute, The Hong Kong Polytechnic University, Shenzhen, 518000, China}

\author{Re-Bing Wu}
\email{rbwu@tsinghua.edu.cn}
\affiliation{Department of Automation, Tsinghua University, Beijing, 100084, China}

\begin{abstract}
	Optimization plays a central role in variational quantum algorithms, where the objective function typically takes the form $F(\boldsymbol{\theta})= \sum_{m=1}^{M} f_m \left(\mathrm{Tr}[U(\boldsymbol{\theta})\rho_m U^\dagger(\boldsymbol{\theta}) O_m]\right)$, with $U(\boldsymbol{\theta})$ being a parameterized quantum ansatz. Understanding the optimization landscape of such objective functions is crucial for assessing the trainability and performance of these algorithms. For the special case $M=1$, it is known that under certain assumptions, the landscape is free of false traps (FTs), i.e., local optima that are not global. In this work, we investigate optimization landscapes of the general case $M\geq1$ and show that the landscape becomes intrinsically more complex. First, we establish a complete framework for analyzing critical features of the optimization landscape, by deriving necessary and sufficient conditions to identify and classify all critical points under some assumptions, which is also of practical importance in designing efficient algorithms independent of whether these assumptions are satisfied. Then, we show that FTs can still emerge on landscapes for $M>1$, standing in stark contrast to the $M=1$ case and further revealing that parameter sufficiency alone is not enough to guarantee a trap-free landscape. Moreover, we uncover a close connection that the emergence of FTs  is necessarily attributed to the loss of distinguishability among the states and/or operators, and fundamentally, to the loss of compatibility of the spectral ordering governed by different objective terms. Our results 
	provide a deeper understanding of the optimization complexity and practical guidance for both algorithmic and problem-setting designs.

	\end{abstract}
	
	\maketitle

	\section{Introduction}
	
	The rapid development of quantum computing has opened new avenues for solving problems that are intractable for classical computers. In the current noisy intermediate-scale quantum era where fully fault-tolerant quantum computers remain out of reach, variational quantum algorithms (VQAs) have emerged as a promising paradigm for harnessing near-term quantum devices, due to their flexibility, adaptability, and potential for achieving practical quantum advantages in broad areas such as combinatorial optimization~\cite{Grange2024,Li2026}, machine learning~\cite{Biamonte2017,Li2021,Cerezo2022}, and simulation~\cite{Banuls2006,Moueddene2020,Endo2020}. Therefore, they have attracted significant attention~\cite{Amaro2022,Huembeli2021,Wang2024,DiezValle2023,Chiew2024,Peruzzo2014,Farhi2014,Bharti2022,Georgescu2014,Daley2022,Preskill2018,Akshay2020,Abbas2021,Chen2022}.
	
	Typically, a VQA aims to minimize or maximize an objective function of the form~\cite{Beer2020,Cerezo2021a,Thanasilp2023,Urbaneja2025}
	\begin{equation}\label{fun1}
		F(\boldsymbol{\theta})=\sum_{m=1}^{M} f_m\left(\tr[U(\boldsymbol{\theta})\rho_m U^\dagger(\boldsymbol{\theta}) O_m]\right),
	\end{equation} 
	where $U(\boldsymbol{\theta})$ represents quantum ansatz parameterized by a set of classically tunable parameters $\boldsymbol{\theta}=(\theta_1,\dots,\theta_N)^\top$ (e.g., rotation angles of logic gates~\cite{Mitarai2018,Cerezo2021}). Here, $\rho_m$ denotes a quantum state on a Hilbert space of dimension $D$, $f_m$ is some function, and $O_m$ represents an observable, a many-body Hamiltonian, an element of a quantum measurement, or another state. For example, in the single-term scenario ($M=1$), it reduces to the widely studied form $F(\boldsymbol{\theta})=\tr[U(\boldsymbol{\theta})\rho U^\dagger(\boldsymbol{\theta})O]$, which arises in paradigmatic VQAs such as the standard variational quantum eigensolver (VQE)~\cite{Peruzzo2014} and quantum approximate optimization algorithm (QAOA)~\cite{Farhi2014}. For $M>1$, it generally appears in advanced applications including the subspace-search VQE (SSVQE)~\cite{Nakanishi2019}, quantum autoencoders~\cite{Romero2017}, and variational approaches to principle component analysis~\cite{Bravo2020}. Remarkably, similar optimization problems also arises frequently in quantum optimal control, state discrimination and related information-processing tasks~\cite{Bae2015,Ge2021,Ge2022}. 
	
		To tackle the above problem, quantum-classical hybrid optimization routines have been developed, wherein the quantum component directly evaluates the function $F(\boldsymbol{\theta})$ for any given $\boldsymbol{\theta}$ on quantum hardware, while the classical component updates the parameters $\boldsymbol{\theta}$ using an algorithmic optimizer. Since gradient-based optimizers are a common choice due to their empirical efficiency in handling nonconvex objective functions~\cite{Stokes2020}, a natural question then arises: can such optimizers successfully find optimal solutions? The performance of these hybrid routines depends critically on the optimization landscape induced by $F(\boldsymbol{\theta})$, namely the topological structure of the objective function. In practice, the landscape may contain obstacles that prevent optimizers from reaching global optima. A prominent example is the presence of {\it false traps} (FTs), i.e., local but not global optima (as illustrated in Fig.~\ref{fig1}) which can trap optimizers in suboptimal solutions and thereby undermine potential quantum advantages~\cite{Wiedmann2025}. 

	For the special case $M =1$, a key observation from prior work is that the structure of the landscape depends crucially on the number of tunable parameters. When the parameter number $N$ is small, the presence of FTs is a generic property~\cite{Kiani2020,Wierichs2020,Riveradean2021}. As $N$ increase, FTs are expected to gradually disappear~\cite{Lee2021,Larocca2023}. Remarkably, under certain assumptions, the optimization landscape can be free of FTs ~\cite{Rabitz2004,Rabitz2005,Wu2007,Russell2017}. This trap-free property underpins much of the optimism in quantum optimal control~\cite{Ho2006} and suggests that gradient-based optimizers can reliably find global optima in such settings. In contrast, it is much less explored for $M>1$ and especially remains an open question whether FTs can be completely avoided under the same assumptions as those in the single-term case.

\begin{figure*}
	\begin{center}
		\includegraphics[width=0.7\textwidth]{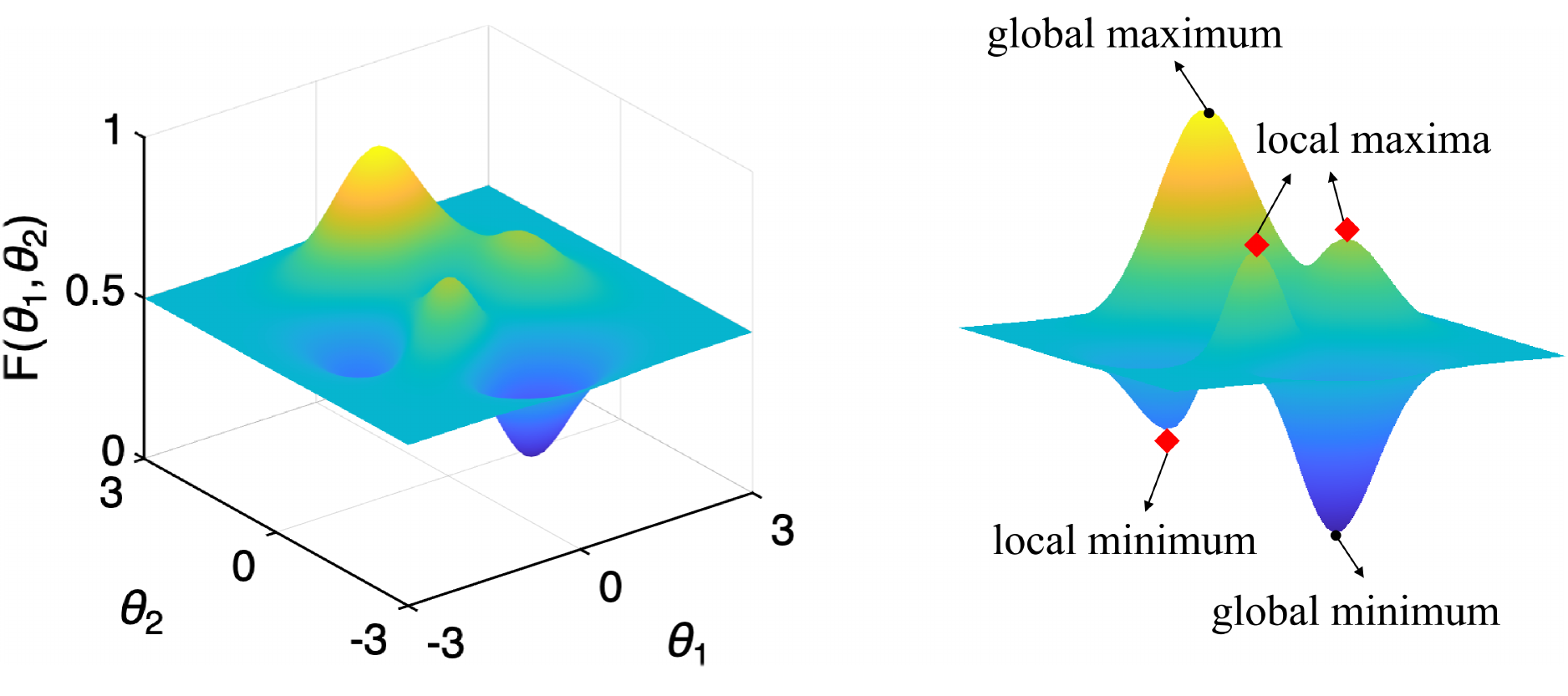}
	\end{center}
	\caption{Illustration of an optimization landscape $F(\boldsymbol{\theta})$ as a function of parameters $\theta = (\theta_1, \theta_2)^\top$. Local maxima (minima), which are not globally optimal with respect to the maximization (minimization) of $F$, are likely to hinder optimizers from reaching globally optimal solutions, thus forming false traps. While the landscapes are known to be trap-free in the single-term case~\cite{Wu2007}, we show that false traps can still emerge in the multi-term case under the same assumptions.} 
	\label{fig1}
\end{figure*}

In this work, we study the optimization landscape of the objective function~(\ref{fun1}), by presenting a complete framework to analyze critical features of optimization landscape under Assumption~\ref{assm1} and~\ref{assm2}. Particularly, we provide necessary and sufficient conditions to identify all critical points and to further classify them as local minima, maxima, or saddles, naturally generalizing Ref.~\cite{Wu2008} from $M=1$ to $M\geq 1$. Importantly, the first- and second-order derivative information obtained within this framework is helpful not only for designing optimization strategies on the Riemannian manifold, but also offers practical guidance for updating parameters in Euclidean geometry~\cite{Wiersema2023}. 
	
Then, an explicit example is constructed in Eq.~(\ref{opt1}) to show the possibility of FTs emerging on the landscape for $M>1$, revealing the fundamental difference between the landscape topologies in regimes $M=1$ and $M>1$. Furthermore, a necessary and sufficient condition is obtained in Theorem~\ref{theo3} to identify FTs among a set of special critical points termed simultaneous, as well as a geometric interpretation. It implies that the presence of FTs cannot be solely attributed to insufficiency of tunable parameters, and also provides a negative answer to the above open question, which is surprisingly different from the $M=1$ case.

Moreover, a deep connection is uncovered between optimization landscape topology and the distinguishability of states and observables in the multi-term objective function. When both sets are assumed perfectly distinguishable, it is proven in Theorem~\ref{theo4} that FTs formed by simultaneous critical points are always absent. Supported with strong numerical evidence, it is further conjectured that the optimization landscape with $M\geq 1$ is trap-free under the perfect distinguishability condition. This reveals that the emergence of FTs is necessarily attributed to the loss of distinguishability. More fundamentally, it is shown that the emergence of FTs is rooted in incompatibility of the spectral ordering generated by different objective terms. These results not only provide a physical  interpretation of landscape complexity but also suggest alternative strategies to mitigate FTs through problem setting designs.

Finally, implications and generalizations of our results are noted, the connection between FTs and frustration in condensed matter systems~\cite{Leon2010} is discussed, and the distinction between FTs and barren plateaus~\cite{McClean2018} is also clarified.

	The remainder of this paper is organized as follows. Section~\ref{sec2} gives a brief introduction to the optimization landscape of the objective function~(\ref{fun1}) and establishes the landscape equivalence between the objective~(\ref{fun1}) and~(\ref{Uloss}), under Assumptions~\ref{assm1} and~\ref{assm2}. Section~\ref{sec3} presents a complete framework for analyzing optimization landscapes and revisits the single-term case. In Sec.~\ref{sec4}, the established framework is further applied to study the general optimization landscape, with an emphasis on simultaneous critical points, and the existence of FTs is explored. Finally, discussions and concluding remarks are provided in Sec.~\ref{sec5} and Sec.~\ref{sec6}, respectively.

\section{Preliminaries}\label{sec2}

\subsection{The optimization landscape}

Consider an optimization problem with the objective function $F(\boldsymbol{\theta})$ given in~(\ref{fun1}). Our aim is to use some algorithmic optimizer to find at least one global optimum denoted by $\boldsymbol{\theta}^*$, such that $F(\boldsymbol{\theta}^*)$ reaches its maximal or minimal value. Indeed, whether the used optimizer can efficiently and successfully find one $\boldsymbol{\theta}^*$ essentially depends on the optimization landscape $\mathcal{L}_{\boldsymbol{\theta}}(F)$ generated by $F(\boldsymbol{\theta})$. 

As illustrated in Fig.~\ref{fig1}, any optimum necessarily corresponds to a critical point $\boldsymbol{\theta}^\prime$ on $\mathcal{L}_{\boldsymbol{\theta}}(F)$, satisfying
\begin{equation}\nonumber
	\frac{\partial F}{\partial \boldsymbol{\theta}^\prime}:=(\partial F/\partial \theta^\prime_1,\cdots,\partial F/\partial \theta^\prime_N)^\top=\boldsymbol{0},
\end{equation}
 and each critical point is further classified as 
\begin{itemize}
	\item [$\bullet$] \textit{global optimum}: $F(\boldsymbol{\theta}^\prime)\geq F(\boldsymbol{\theta})$ (global maximum) or $F(\boldsymbol{\theta}^\prime)\leq F(\boldsymbol{\theta})$ (global minimum) for all $\boldsymbol{\theta}$.
	\item[$\bullet$] \textit{local optimum}: $ F(\boldsymbol{\theta}^\prime)\geq F(\boldsymbol{\theta})$ (local maximum) or $F(\boldsymbol{\theta}^\prime)\leq F(\boldsymbol{\theta})$ (local minimum) for all $\boldsymbol{\theta}$ within a neighborhood of $\boldsymbol{\theta}^\prime$.
	\item[$\bullet$] \textit{saddle point}: $\boldsymbol{\theta}^\prime$ is neither a local maximum nor a local minimum.
\end{itemize}
The above classification task can be accomplished via the associated Hessian matrix defined as
\begin{equation} \nonumber
	H_{\boldsymbol{\theta}}:=(H)_{N\times N},~~H_{nn^\prime}=\frac{\partial^2 F}{\partial \theta_n\partial \theta_{n^\prime}}, ~~ n, n^\prime=1,\dots, N.
\end{equation}
For a critical point $\boldsymbol{\theta}^\prime$, if its Hessian matrix is negative (positive) semidefinite, then it corresponds to a local maximum (minimum); Otherwise, it is a saddle. Generally, determining global optimum is challenging, as it requires enumerating all possible critical points.

It is also illustrated in Fig.~\ref{fig1} that there possibly exist critical points on the landscape that are local but not global optima, likely to prevent algorithmic optimizers (e.g., gradient-based ones) from successfully finding optimal solutions to the optimization problem and hence forming FTs. 

\begin{Definition}
	False traps are critical points on the optimization landscape that are locally but not globally optimal. Formally, a false trap is called a critical point $\boldsymbol{\theta}^\prime$ for maximizing (minimizing) $F(\boldsymbol{\theta})$, if $\partial F/\partial \boldsymbol{\theta}^\prime=\boldsymbol{0}$, $H_{\boldsymbol{\theta}^\prime}\leq \boldsymbol{0}$ ($H_{\boldsymbol{\theta}^\prime}\geq \boldsymbol{0}$), and $F(\boldsymbol{\theta}^\prime)\neq F(\boldsymbol{\theta}^*)$.
\end{Definition}

The presence of FTs poses a severe bottleneck for quantum-classical optimization, which is also detrimental for witnessing quantum advantage of the underlying information processing task. Therefore, a detailed analysis of FTs is demanding and proper tools are needed to mitigate this phenomenon.

\subsection{The landscape equivalence}

Note from Eq.~(\ref{fun1}) that the objective function $F(\boldsymbol{\theta})$ is composed of two sequential mappings
\begin{equation}\label{map}
	\boldsymbol{\theta} \rightarrow U \rightarrow F.
\end{equation}
The first maps real parameters $\boldsymbol{\theta}\in\mathbb{R}^N$ to quantum unitary ansatz $U\in \mathcal{U}(D)$, which is then mapped to a real number $F\in \mathbb{R}$ by the second. Evidently, it is more straightforward to optimize the function $F$ over ansatz $U$ than parameters $\boldsymbol{\theta}$, and also it is easier to study the optimization landscape $\mathcal{L}_U(F)$ induced by 
\begin{equation}\label{Uloss}
	F(U):=\sum_{m=1}^{M} f_m(\tr[U\rho_m U^\dagger O_m])
	\end{equation}
 rather than $\mathcal{L}_{\boldsymbol{\theta}}(F)$. However, we cannot directly replace $\boldsymbol{\theta}$ with $U$, because $\mathcal{L}_{U}(F)$ is generally not equivalent to $\mathcal{L}_{\boldsymbol{\theta}}(F)$. 

We introduce two assumptions on the first mapping in Eq.~(\ref{map}) to establish the landscape equivalence between $F(\boldsymbol{\theta})$ and $F(U)$. In particular, the first is given as:

\begin{Assumption}\label{assm1}
Any operator $U\in\mathcal{U}(D)$ can be realized by some admissible $\boldsymbol{\theta}$.
\end{Assumption}

It indicates the function range of the ansatz domain is equivalent to that of the parameter domain, i.e, $F(\mathcal{U}(D))= F(\mathbb{R}^N)$, thus establishing the zero-th equivalence between $\mathcal{L}_{U}(F)$ and $\mathcal{L}_{\boldsymbol{\theta}}(F)$. Additionally, the second is:

\begin{Assumption}\label{assm2}
	The Jacobin matrix $\partial U/\partial \boldsymbol{\theta}$ is nonsingular for any possible $\boldsymbol{\theta}$.
\end{Assumption}

This assumption ensures that $\boldsymbol{\theta}^\prime$ is a critical point of $F(\boldsymbol{\theta})$ if and only if $U(\boldsymbol{\theta}^\prime)$ is a critical point of $F(U)$, and furthermore, the type of $\boldsymbol{\theta}^\prime$ (local maximum, minimum, or saddle) is identical to that of $U(\boldsymbol{\theta}^\prime)$~\cite{Wu2007,Ge2022}. Consequently, the landscape equivalence, up to the second order, is established for $F(\boldsymbol{\theta})$ and $F(U)$, and correspondingly, Eq.~(\ref{map}) is strengthened to
\begin{align}
	\boldsymbol{\theta} \rightleftharpoons U \rightarrow F,
\end{align}
with the second-order equivalence $\rightleftharpoons$, allowing us to focus on the optimization landscape $\mathcal{L}_{U}(F)$ for $F(\boldsymbol{\theta})$.

It is remarked that the above assumptions are ${\it strong}$, in the sense that Assumption~\ref{assm1} needs sufficiently many parameters and Assumption~\ref{assm2} requires any local direction around parameters be attainable, which may not always be satisfied~\cite{Ge2021,Anschuetz2022}. It is also pointed out that they have been used in quantum optimal control theory to study control landscapes~\cite{Wu2008,Wu2007,Hsieh2010,Rabitz2004,Rabitz2005}.

\section{Identification and Classification of Critical Points}\label{sec3}

In this section, we first establish a complete framework to study optimization landscapes of $F(U)$ in~(\ref{Uloss}) and $F(\boldsymbol{\theta})$ in~(\ref{fun1}) under Assumptions~\ref{assm1} and~\ref{assm2}, by obtaining necessary and sufficient conditions to identify all critical points and to determine their types. Then, we illustrate how it works for the $M=1$ case, by deriving analytical forms of critical points and proving the absence of FTs. It is worth noting that this framework is of practical importance in designing algorithmic optimizers, independent of whether Assumptions~\ref{assm1} and~\ref{assm2} are satisfied or not. 

\subsection{The framework for optimization landscapes}

Given any optimization problem with the objective function~(\ref{fun1}) or~(\ref{Uloss}), we are able to obtain

\begin{Theorem} \label{theo1}
$U$ is a critical point of the objective function $F(U)$ in~(\ref{Uloss}),  if and only if 
	\begin{equation}\label{condition}
		\sum_{{m}=1}^{M}\omega_m(U) [U\,\rho_{m}\,U^{\dagger}, O_{m}]=0,
	\end{equation}	
	where $\omega_m(U)=f_m^\prime(l_m(U))$ describes the derivative of $f_m$ with respect to $l_m(U):=\tr[U\rho_m U^\dagger O_m]$ and $[\cdot,\cdot]$ the commutator. Moreover, $\boldsymbol{\theta}^\prime$ is a critical point of $F(\boldsymbol{\theta})$ in~(\ref{fun1}), if and only if the above critical-point condition holds with $U=U(\boldsymbol{\theta}^\prime)$, under Assumptions~\ref{assm1} and~\ref{assm2}. 
\end{Theorem}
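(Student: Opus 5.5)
The plan is to compute the first-order variation of $F(U)$ along all tangent directions of the unitary group and identify when it vanishes. The tangent space of $\mathcal{U}(D)$ at $U$ consists of $iKU$ with $K$ Hermitian, realized by the curves $U(t)=e^{itK}U$. For each term,
\begin{equation}\nonumber
\frac{d}{dt}\Big|_{t=0} l_m(U(t))=\tr\big[i(KU\rho_mU^\dagger-U\rho_mU^\dagger K)O_m\big]=i\,\tr\big(K[U\rho_mU^\dagger,O_m]\big),
\end{equation}
where the last step uses cyclicity of the trace. By the chain rule,
\begin{equation}\nonumber
\frac{d}{dt}\Big|_{t=0}F(U(t))=i\,\tr\Big(K\sum_m\omega_m(U)[U\rho_mU^\dagger,O_m]\Big)=\langle K,\,\Omega\rangle ,
\end{equation}
where $\Omega:=i\sum_m\omega_m(U)[U\rho_mU^\dagger,O_m]$.

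Next I would check that $\Omega$ is Hermitian. Both $U\rho_mU^\dagger$ and $O_m$ are Hermitian, so each commutator is anti-Hermitian. Each $\omega_m(U)=f_m'(l_m(U))$ is real, because $l_m$ is real and $f_m$ is a real differentiable function. Hence $\Omega$ is Hermitian.

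The map $K\mapsto\tr(K\Omega)$ is the Hilbert--Schmidt inner product on the real vector space of Hermitian matrices. It vanishes for all Hermitian $K$ if and only if $\Omega=0$: for sufficiency this is immediate, and for necessity take $K=\Omega$ to get $\tr(\Omega^2)=0$. So $U$ is critical, meaning the differential of $F$ vanishes on the whole tangent space, if and only if condition~(\ref{condition}) holds. Since the curves $e^{itK}U$ exhaust all tangent directions, this settles the first part.

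For the parameter statement, apply the chain rule to the composition $\boldsymbol\theta\mapsto U(\boldsymbol\theta)\mapsto F$:
\begin{equation}\nonumber
\partial F/\partial\theta_n=dF_U\big(\partial U/\partial\theta_n\big).
\end{equation}
If $U(\boldsymbol\theta')$ is critical, every partial derivative vanishes, so $\boldsymbol\theta'$ is critical. Conversely, Assumption~\ref{assm2} means the Jacobian is nonsingular, i.e. the columns $\partial U/\partial\theta_n$ span the full tangent space at $U(\boldsymbol\theta')$. Vanishing of all $\partial F/\partial\theta_n$ then forces $dF_U=0$ on the whole tangent space, so $U(\boldsymbol\theta')$ is critical. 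Assumption~\ref{assm1} ensures that every critical $U$ is actually attained by some $\boldsymbol\theta$, which is needed for the two sets of critical points to correspond fully.

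The main subtlety is interpreting Assumption~\ref{assm2}. "Nonsingular" should be read as surjectivity of the differential onto the $D^2$-dimensional real tangent space (full rank). With that reading, the equivalence follows directly. Everything else is routine calculus.
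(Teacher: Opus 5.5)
Your proposal is correct and follows the paper's own route: differentiate $F$ along $e^{isA}U$ with $A$ Hermitian, obtain $\tr\bigl[iA\sum_m\omega_m(U)[U\rho_mU^\dagger,O_m]\bigr]$, conclude from the arbitrariness of $A$, and then transfer the result to $\boldsymbol{\theta}$ via the assumptions. You also fill in steps the paper leaves implicit: the Hermiticity of $\Omega$ together with the test choice $K=\Omega$, and the chain-rule argument showing that the full-rank reading of Assumption~\ref{assm2} alone gives the $\boldsymbol{\theta}$-equivalence, while Assumption~\ref{assm1} is needed only for the two critical sets to match fully.
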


The proof is given as follows. First, the neighborhood of an arbitrary unitary ansatz $U$ is parametrized as 
\begin{align}\label{parameterisation}
U(s,A)=e^{isA}U
\end{align}
with $s\in \mathbb{R}$ and $A^\dagger =A$, and the first-order derivative of $F(U(s, A))$, with respect to $s$, at the point $s=0$ is 
\begin{equation}\label{direction}
	\left.\frac{d}{d\,s} F\left(U(s,A)\right)\right|_{s=0}= \tr\left[ iA\sum_{m=1}^{M}\omega_m(U)[U\rho_{m}U^{\dagger},O_{m}]\right].
\end{equation}
Then, $U$ is a critical point of $F(U)$ if and only if the above derivative is zero for any Hermitian $A$.  Thus, the critical-point condition~(\ref{condition}) is derived as desired. Finally, recall from Assumptions~\ref{assm1} and~\ref{assm2} that $\boldsymbol{\theta}^\prime$ is a critical point of $F(\boldsymbol{\theta})$ if and only if $U(\boldsymbol{\theta}^\prime)$ is a critical point of $F(U)$, and consequently, the condition~(\ref{condition}) becomes
\begin{align}
		\sum_{{m}=1}^{M}\omega_m(U)[U(\boldsymbol{\theta}^\prime)\,\rho_{m}\,U^{\dagger}(\boldsymbol{\theta}^\prime), O_{m}]=0
\end{align}
for any critical point $\boldsymbol{\theta}^\prime$. This completes the proof.

Theorem~\ref{theo1} provides a necessary and sufficient condition to identify critical points on the landscapes. To further classify them, we examine the second derivative. Utilizing again the parametrisation form~(\ref{parameterisation}) leads to the second-order derivative of $F(U(s, A))$ at point $s=0$
\begin{eqnarray}
	h_{U}(A)&=&	\left.\frac{d^2}{d^2\, s} F\left(U(s,A)\right)\right|_{s=0} \nonumber \\
	&=&\sum_{m=1}^{M}\omega_m(U)\tr[U\rho_m U^\dagger(AO_m A-\frac{1}{2}\{O_m,A^2\})] \nonumber\\
	&&-f^{\prime\prime}_m(l_m(U))\tr^2\left[A[U\rho_mU^\dagger,O_m]\right]\label{gHess}\\
	&=&\sum_{m=1}^{M} \omega_m(U)\tr[U\rho_m U^\dagger(AO_m A-O_m A^2)] \label{Hess}
\end{eqnarray}
with the anti-commutator $\{\cdot,\cdot \}$. Here, the third equality follows from the critical-point condition~(\ref{condition}) and $\tr[XY]=\tr[YX]$ for matrices $X, Y$, while the second holds independent of whether the point $U$ is critical or not.

Denote further by ${\rm vec}(X)$ the vectorization operation on matrix $X$ and by ${\rm vech}(X)$ half-vectorization on $X$ with certain symmetry (see Appendix~\ref{proof1} for more details). It is proven in Appendix~\ref{proof2} that by decomposing matrices into real and imaginary components, i.e., $T=T_{\rm re}+iT_{\rm im}$ and $A=A_{\rm re}+iA_{\rm im}$, we can rewrite the above second-order derivative into a quadratic form of
\begin{align}
		{\rm vec}(A)^\dagger\, T\, {\rm vec}(A)
&=	\begin{bmatrix}{\rm vech}(A_{\rm re})\\{\rm vech}(A_{\rm im})
\end{bmatrix}^\top \begin{bmatrix}
	D_{\rm sy} & \boldsymbol{0}\\
	\boldsymbol{0} & D_{\rm ay}
\end{bmatrix}^\top
 \nonumber \\
&\begin{bmatrix}
	T_{\rm re} & -T_{\rm im}\\
	T_{\rm im} & T_{\rm re}
\end{bmatrix} \begin{bmatrix}
	D_{\rm sy} & \boldsymbol{0}\\
	\boldsymbol{0} & D_{\rm ay}
\end{bmatrix}\begin{bmatrix}{\rm vech}(A_{\rm re})\\{\rm vech}(A_{\rm im})
\end{bmatrix} \nonumber \\
	&\equiv \boldsymbol{v}^\top H \boldsymbol{v} \label{Hess2}
\end{align}
with $T=\sum_{m} \omega_m(U)[O_m^\top\otimes U\rho_m U^\dagger-\mathbb{I}\otimes U\rho_m U^\dagger O_m]$, duplication matrix $D_{\rm sy}$ for the symmetric $A_{\rm re}$, and $D_{\rm ay}$ for the antisymmetric $A_{\rm im}$. All elements of vector $\boldsymbol{v}=\begin{bmatrix}{\rm vech}(A_{\rm re})\\{\rm vech}(A_{\rm im})\end{bmatrix}\in\mathbb{R}^{D^2}$ can be arbitrarily valued, so the matrix
\begin{align}
	H=\begin{bmatrix}
		D_{\rm sy} & \boldsymbol{0}\\
		\boldsymbol{0} & D_{\rm ay}
	\end{bmatrix}^\top\begin{bmatrix}
		T_{\rm re} & -T_{\rm im}\\
		T_{\rm im} & T_{\rm re}
	\end{bmatrix}\begin{bmatrix}
		D_{\rm sy} & \boldsymbol{0}\\
		\boldsymbol{0} & D_{\rm ay}
	\end{bmatrix} \label{Hessmatrix}
\end{align}
can be considered as the corresponding Hessian matrix, by noting from Eq.~(\ref{Hess}) that any critical point $U$ is a local maximum (minimum) if and only if $h_{U}(A)$ is nonpositive (nonnegative) for any $A$, and further from Eq.~(\ref{Hess2}) that $h_{U}(A)$ is always nonpositive (nonnegative) if and only if matrix $H$ in Eq.~(\ref{Hessmatrix}) is negative (positive) semidefinite. This immediately yields

\begin{Theorem}\label{theo2}
	Any critical point $U$ of the objective function $F(U)$ in~(\ref{Uloss}) is locally maximal (minimal) if and only if its Hessian matrix in Eq.~(\ref{Hessmatrix}) is negative (positive) semidefinite. Moreover, a critical point $\boldsymbol{\theta}^\prime$ is locally maximal (minimal) for $F(\boldsymbol{\theta})$ in~(\ref{fun1}) if and only if the Hessian matrix generated by $U(\boldsymbol{\theta}^\prime)$ is negative (positive) semidefinite, under Assumptions~\ref{assm1} and \ref{assm2}.
	\end{Theorem}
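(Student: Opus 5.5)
The argument follows the paper's own logic: reduce everything to the one-parameter curves $U(s,A)=e^{isA}U$ of Eq.~(\ref{parameterisation}), and then translate the second-order condition into semidefiniteness of the real matrix $H$ in Eq.~(\ref{Hessmatrix}).

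\emph{Step 1: curves through $U$.} Since $\mathcal{U}(D)$ is a compact Lie group, every unitary near a critical point $U$ can be written as $e^{iA}U$ with $A$ Hermitian and small. Hence it suffices to study $F$ along the curves $U(s,A)=e^{isA}U$, for which we set $g_A(s)=F(U(s,A))$. By Theorem~\ref{theo1}, $g_A^\prime(0)=0$ for every Hermitian $A$. Expanding to second order gives
\begin{equation}\nonumber
g_A(s)=F(U)+\tfrac{s^2}{2}h_U(A)+O(s^3),
\end{equation}
with $h_U(A)$ as in Eq.~(\ref{Hess}).

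\emph{Step 2: matching the quadratic forms.} Write $A=A_{\rm re}+iA_{\rm im}$ with $A_{\rm re}$ real symmetric and $A_{\rm im}$ real antisymmetric. Using ${\rm tr}[XAYA]={\rm vec}(A)^\dagger (Y^\top\otimes X)\,{\rm vec}(A)$ for Hermitian $A$, together with cyclicity of the trace, one gets $h_U(A)={\rm vec}(A)^\dagger T\,{\rm vec}(A)$ with $T$ as given. Next, substitute ${\rm vec}(A)={\rm vec}(A_{\rm re})+i\,{\rm vec}(A_{\rm im})$ and ${\rm vec}(A_{\rm re})=D_{\rm sy}{\rm vech}(A_{\rm re})$, ${\rm vec}(A_{\rm im})=D_{\rm ay}{\rm vech}(A_{\rm im})$. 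Because $h_U(A)$ is real, only the real part of the form survives. This yields Eq.~(\ref{Hess2}), namely $h_U(A)=\boldsymbol{v}^\top H\boldsymbol{v}$. The map $A\mapsto \boldsymbol{v}$ is a linear bijection from the Hermitian matrices onto $\mathbb{R}^{D^2}$.

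\emph{Step 3: the equivalence.} Suppose $U$ is a local maximum. Then $g_A$ has a local maximum at $s=0$ for every $A$, so $h_U(A)\le 0$ for every $A$, which means $\boldsymbol{v}^\top H\boldsymbol{v}\le0$ for all $\boldsymbol{v}$, i.e.\ $H\le 0$. For the converse, the paper's framing (matching the standard control-landscape results) takes semidefiniteness of $H$ as the characterisation; the claimed direction $H\le0\Rightarrow$ local maximum is its second-order counterpart.

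This converse is the main obstacle. If $H$ is negative definite on directions transverse to the critical set, the Morse--Bott lemma gives a local maximum. The critical sets here are typically orbits of the stabilizers of the $U\rho_mU^\dagger$ and $O_m$, and the null directions of $H$ should be exactly their tangent directions. I would first try to prove the converse under this Morse--Bott nondegeneracy, flagging it as the implicit hypothesis. If that fails, I would restrict the sufficiency claim to the form stated in the definition of false traps. The minimum case follows identically with the inequalities reversed.

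\emph{Step 4: transfer to $\boldsymbol{\theta}$.} Under Assumptions~\ref{assm1} and~\ref{assm2}, the map $\boldsymbol{\theta}\mapsto U$ is a surjective local diffeomorphism (a submersion). Local extrema therefore correspond: an open neighbourhood of $\boldsymbol{\theta}^\prime$ maps onto an open neighbourhood of $U(\boldsymbol{\theta}^\prime)$, and preimages of neighbourhoods are neighbourhoods. It follows that $\boldsymbol{\theta}^\prime$ is locally maximal (minimal) for $F(\boldsymbol{\theta})$ if and only if $U(\boldsymbol{\theta}^\prime)$ is locally maximal (minimal) for $F(U)$. Combining this with Step 3 gives the second statement of the theorem.
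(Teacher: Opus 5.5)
Your route is the paper's route: curves $e^{isA}U$, vectorization into $\boldsymbol{v}^\top H\boldsymbol{v}$, then transfer to $\boldsymbol{\theta}$ through openness of the parametrization. Step~4 and the necessity half of Step~3 are fine. The genuine gap is the one you flagged: you never prove $H\le 0\Rightarrow$ local maximum. Neither does the paper, which simply asserts that $U$ is a local maximum iff $h_U(A)\le 0$ for all $A$. The gap cannot be closed, because that direction is false, even for linear $f_m$ and commuting data. Testing curves $e^{isA}U$ one at a time can only give necessity, so ``it suffices to study $F$ along the curves'' in Step~1 is wrong for the converse. Morse--Bott nondegeneracy has to be \emph{assumed}, not derived. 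Falling back on the paper's definition of false traps does not help, since that definition uses the same criterion.

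Here is a counterexample. Take $D=3$, $f_m(x)=x$, $\rho_1={\rm diag}\{1,0,0\}$, $O_1={\rm diag}\{1,-1,0\}$, $\rho_2={\rm diag}\{\frac12,0,\frac12\}$, $O_2={\rm diag}\{-2,2,4\}$.
\begin{itemize}
\item Because $\sum_m(\lambda^m_1-\lambda^m_2)(o^m_1-o^m_2)=0$, every $U=W\oplus e^{i\phi}$ with $W\in\mathcal{U}(2)$ satisfies Eq.~(\ref{condition}) and has $F(U)=2$.
\item The Hessian vanishes on the tangent directions of this set.
\item On the transverse $(1,3),(2,3)$ directions it has eigenvalues $-1\pm\sqrt{2-2c}$, each with multiplicity two, where $c=|W_{11}|^2-|W_{12}|^2$. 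Consistently with Eq.~(\ref{geigens}), this gives $h_{13}=h_{23}=-1$ at $W=\mathbb{I}$ and $\{-3,1\}$ at the swap.
\item At $c=\frac12$ the transverse eigenvalues are $\{0,-2\}$, so $H\le 0$.
\item Yet critical points with $c<\frac12$ lie arbitrarily close. They have the same value $2$ and an ascent direction, so the point at $c=\frac12$ is a saddle.
\end{itemize}
Its extra null direction is also not tangent to the critical set, contrary to your expectation.

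Second, Step~1 inherits an error from Eq.~(\ref{Hess}). The terms $-f_m''(l_m)\tr^2[A[U\rho_mU^\dagger,O_m]]$ of Eq.~(\ref{gHess}) drop out only if each $f_m$ is linear or each commutator vanishes. For $M>1$, Eq.~(\ref{condition}) only kills their $\omega_m$-weighted sum. For nonlinear $f_m$ the matrix $H$ is then not the Hessian, and even necessity fails. For example, take $D=2$, $O_1=O_2=O_3=\sigma_z$, $\rho_1=\frac12(\mathbb{I}+\sigma_x)$, $\rho_2=\frac12(\mathbb{I}+\sigma_y)$, $\rho_3={\rm diag}\{1,0\}$, $f_1(x)=f_2(x)=-x^2$, $f_3(x)=x$. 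Writing $U^\dagger\sigma_zU=\boldsymbol{u}\cdot\boldsymbol{\sigma}$ gives $F=u_z^2+u_z-1$, so $U=\sigma_x$ (where $u_z=-1$) is a local maximum. Yet $\omega_1=\omega_2=0$ there and $h_U(\sigma_x)=2>0$. A correct version needs linear $f_m$ (or the $f''$ term restored in $T$) even for the ``only if'' part. For the ``if'' part it needs definiteness of $H$ on the normal space of a Morse--Bott critical manifold.
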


Based on these two theorems, a complete framework is built up to reveal critical features of every optimization landscape, in the sense that Theorem~\ref{theo1} provides a criterion to identify critical points and Theorem~\ref{theo2} further provides a feasible way to classify them. Moreover, it also works for non-critical points, and importantly, provides practical guidance to design gradient-based algorithms for the optimization of $F(U)$ and $F(\boldsymbol{\theta})$. 

Particularly, the critical-point condition~(\ref{condition}) must be violated by all non-critical points, and thus the first-order derivative in~(\ref{direction}) can be nonzero for some $A$. If matrix $A$ in $U(s,A)$ is chosen as
\begin{align}\label{direction3}
	A=\mp  i\sum_{m=1}^M \omega_m(U)[U\rho_m U^\dagger, O_m],
\end{align}
then $\left.d F\left(U(s,A)\right)/d s\right|_{s=0}=\pm \tr\left[A^\dagger A \right]\neq 0$, implying $F(U)$ increases or decreases along the direction $A$ at the point $U$. As a consequence, a gradient update rule for optimizing $F(U)$ is obtained as
\begin{equation}\label{strategy}
	U\rightarrow e^{i\eta A}U
\end{equation}
with a learning rate $\eta$. This gradient generation process can be regarded as Riemannian gradient flow on manifold $\mathcal{U}(D)$. Furthermore, the Hessian matrix of non-critical points can be similarly constructed from Eq.~(\ref{gHess}), which is of practical importance in adjusting learning rates and in avoiding saddles by choosing $A$ such that $h_{U}(A)>0$ for maximizing $F(U)$ or $h_U(A)<0$ for minimizing $F(U)$. 

When it comes to optimizing $F(\boldsymbol{\theta})$, it is a good choice to update parameters $\boldsymbol{\theta}$ in the Euclidean geometry along the direction that approximates the Riemannian gradient flow~(\ref{strategy})~\cite{Wiersema2023} and to adjust learning rates based on the corresponding $h_{U(\boldsymbol{\theta})}$, both of which can always be realized independent of whether Assumptions~\ref{assm1} and/or~\ref{assm2} are satisfied or not.

\subsection{The $M=1$ case revisited}\label{M1}

Consider the $M=1$ case where function $f$ is linear, i.e., its first-order derivative is constant, which has wide applications in variational quantum eigensolver and quantum approximate optimization algorithm. It follows first from Theorem~\ref{theo1} that the critical-point condition~(\ref{condition}) simplifies to 
\begin{align}
[U\,\rho\,U^{\dagger}, O]=0,
\end{align}
for $F(U)$ with a single state $\rho$ and operator $O$. Denote by $P$ the unitary transformation that diagonalizes $\rho$ into $\hat{\rho}=P\rho P^\dagger={\rm diag}\{\underbrace{\lambda_1,\cdots,\lambda_1}_{p_1},\cdots,\underbrace{\lambda_r,\cdots,\lambda_r}_{p_r} \}$ with eigenvalues $\lambda_1>\cdots>\lambda_r$ and multiplicities $p_1,\cdots,p_r$, and by $ Q$ the unitary transformation such that $\hat{O}=Q\,O Q^\dagger={\rm diag}\{\underbrace{o_1,\cdots,o_1}_{q_1},\cdots,\underbrace{o_s,\cdots,o_s}_{q_s} \}$ with eigenvalues $o_1>\cdots>o_s$ and multiplicities $q_1,\cdots,q_s$. It is shown in~\cite{Wu2007} that all critical points are given in a set
	\begin{align} \label{solution1}
		\mathcal{S}=\{Q^\dagger U^\dagger_q\pi U_p P: U_q\in \mathcal{U}(\boldsymbol{q}), U_p\in \mathcal{U}(\boldsymbol{p}), \pi\in \mathcal{P}\},
	\end{align}
where $\mathcal{U}(\boldsymbol{p})$ represents the product group $\mathcal{U}(p_1)\times\cdots\times\mathcal{U}(p_r)$ with $\mathcal{U}(p_i)$ the $p_i$-dimensional unitary group for $i=1,\dots,r$, $\mathcal{U}(\boldsymbol{q})=\mathcal{U}(q_1)\times\cdots\times\mathcal{U}(q_s)$ similarly defined, and $\mathcal{P}$ the permutation group. 

Then, substituting any critical point $U^\prime=Q^\dagger U^\dagger_q\pi U_p P$ into $F$ yields
\begin{align}
	F(U^\prime)=\tr[\hat{O}\pi\hat{\rho}\pi^\dagger]=\sum_{k=1}^D \lambda_{(k)}\,o_k,
\end{align}
where $\lambda_{(k)}, o_k$ refer to the $k$-th diagonal entry of $\pi \hat{\rho} \pi^\dagger, \hat{O}$, respectively, and into the Hessian matrix~(\ref{Hessmatrix}) yields that its diagonal entries coincide with its eigenvalues
\begin{equation}\label{eigs}
	h_{kk'}=-(\lambda_{(k)}-\lambda_{(k')})(o_k-o_{k'}), ~~1\leq k\leq k'\leq D.
\end{equation}
The complete derivation is deferred to Appendix~\ref{appendixA.3}.

Following finally from Theorem~\ref{theo2} and Eq.~(\ref{eigs}) yields that $U^\prime$ is a local maximum (minimum) if and only if all $h_{kk'}$s are nonpositive (nonnegative). Note that $o_k$s in Eq.~(\ref{eigs}) are in descending order, i.e., $o_k\geq o_{k'}$ for $k<k'$, so $\lambda_{(k)}$s must obey the descending (ascending) order of $\lambda_{(k)}\geq \lambda_{(k')}$  ($\lambda_{(k)}\leq \lambda_{(k')}$) to ensure nonpositive (nonnegative) $h_{kk^\prime}$s. Thus, all local maxima (minima) require $\lambda_{(k)}$s in the descending (ascending) order, leading to an identical  maximal (minimal) landscape value $F(U^\prime)$. This recovers a remarkable result obtained in quantum optimal control theory~\cite{Rabitz2004,Rabitz2006,Ho2006,Wu2007} that the landscape is devoid of FTs for $M=1$.

It is remarked that a structural property of critical points on the single-term landscape is revealed via Eq.~(\ref{eigs}). Indeed, each critical point is associated with a spectral ordering between the state and the observable, specified by a permutation $\pi$. Particularly, all local maxima (or minima) admit a common spectral ordering, explaining why FTs are avoided. In the following, it serves as a useful benchmark for revealing the richer behavior of general optimization landscapes.

\section{Landscapes of Multi-Term Objectives}\label{sec4}

We continue to apply the above framework study the optimization landscape generated by the multi-term objective
\begin{equation}\label{linear}
	F(U)=\sum_{m=1}^{M}\omega_m
	\tr(U\rho_mU^\dagger O_m),
\end{equation}
where $f_m$s are linear, i.e., $f_m(x)=\omega_mx$. As a general nonlinear function $f$ could change the critical properties of optimization landscapes and even lead to the emergence of FTs in the single-term case, we mainly focus on the linear setting to reduce the possibility that the landscape complexity originates purely from nonlinear transformations of expectation values. Furthermore, it enables a direct case comparison between $M=1$ and $M>1$. 

Generally, it is challenging to derive explicit forms of critical points on the optimization landscape of $F(U)$. Therefore, we study a class of critical points, termed simultaneous critical points.
\begin{Definition}
	A simultaneous critical point is a critical point of
	$F(U)$ satisfying
	\begin{equation}\label{simcritical}
		[U\rho_mU^\dagger,O_m]=0, \qquad \forall\,m=1,\dots,M.
	\end{equation}
\end{Definition}
It follows from the above subsection that simultaneous critical points are in the following set
\begin{align}\label{generalsolution}
	\mathcal{S}=\cap_m^M\mathcal{S}_m,
\end{align}
where each $\mathcal{S}_m$ is given as Eq.~(\ref{solution1}) with $\rho_m$ and $O_m$. Evidently, it might be empty.

In this section, we first give one sufficient condition to ensure the nonemptiness of $\mathcal{S}$ and then obtain an explicit form of simultaneous critical points. Surprisingly, we find that FTs can emerge on optimization landscapes and further derive one necessary and sufficient condition to identify FTs. Finally, the close connection between the presence of FTs and indistinguishability is explored.

\subsection{Simultaneous critical points}

 We provide one sufficient condition to guarantee the existence of simultaneous critical points in the objective function~(\ref{linear}).

\begin{Lemma}\label{lemma1}
	If the states $\rho_1,\dots,\rho_M$ are pairwise commutative and the observables $O_1,\dots,O_M$ are pairwise commutative in the objective function~(\ref{linear}), then simultaneous critical points exist, or equivalently, the set $\mathcal{S}$  given as Eq.~(\ref{generalsolution}) is nonempty.
\end{Lemma}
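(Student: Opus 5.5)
Since the states $\rho_1,\dots,\rho_M$ are Hermitian and pairwise commute, they are simultaneously diagonalizable: there is a single unitary $P$ with $P\rho_mP^\dagger=\Lambda_m$ diagonal for every $m$. In the same way, there is a single unitary $Q$ with $QO_mQ^\dagger=\Omega_m$ diagonal for every $m$. We then take the candidate
\begin{align}
U^\prime=Q^\dagger P,
\end{align}
which corresponds to $U_q=\mathbb{I}$, $U_p=\mathbb{I}$ and $\pi=\mathbb{I}$ in the notation of Eq.~(\ref{solution1}).

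To check the simultaneous critical-point condition~(\ref{simcritical}), compute for each $m$
\begin{align}
[U^\prime\rho_mU^{\prime\dagger},O_m]=Q^\dagger[\Lambda_m,\Omega_m]Q=0,
\end{align}
since diagonal matrices commute. Thus every term of Eq.~(\ref{condition}) vanishes separately, so $U^\prime$ is a critical point of $F(U)$ in~(\ref{linear}) for any choice of weights $\omega_m$, by Theorem~\ref{theo1}. It is therefore a simultaneous critical point.

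We also need $U^\prime\in\mathcal{S}_m$ for each $m$, with $\mathcal{S}_m$ as in Eq.~(\ref{solution1}). The statement there uses the particular diagonalizing unitaries $P_m,Q_m$ that order eigenvalues in descending order. Our $P$ differs from $P_m$ only by a permutation followed by a block unitary within the eigenspaces of $\rho_m$, i.e. $P=\sigma_m V_mP_m$ with $V_m\in\mathcal{U}(\boldsymbol{p})$. Likewise, $Q=\tau_mW_mQ_m$ with $W_m\in\mathcal{U}(\boldsymbol{q})$. Then
\begin{align}
U^\prime=Q_m^\dagger W_m^\dagger(\tau_m^\dagger\sigma_m)V_mP_m,
\end{align}
which has the form of Eq.~(\ref{solution1}). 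Alternatively, membership in $\mathcal{S}_m$ follows directly from $[U^\prime\rho_mU^{\prime\dagger},O_m]=0$, since $\mathcal{S}_m$ is exactly the solution set of this single commutator condition by the $M=1$ analysis. Either way, $U^\prime\in\cap_m\mathcal{S}_m=\mathcal{S}$, so $\mathcal{S}$ is nonempty.

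The only genuinely nontrivial step is the existence of the common diagonalizing unitaries $P$ and $Q$. This is the standard theorem that a commuting family of normal matrices is simultaneously unitarily diagonalizable, and I would simply invoke it. Everything else is direct substitution. Note that the argument produces a whole family of simultaneous critical points, namely $Q^\dagger\pi P$ for any permutation $\pi$, which will be useful later.
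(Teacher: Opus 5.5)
Your proof is correct and follows essentially the paper's route (Appendix~\ref{appendixB}): simultaneously unitarily diagonalize the commuting states and the commuting observables, after which any $Q^\dagger\pi P$ makes every commutator $[U\rho_mU^\dagger,O_m]$ vanish. The paper additionally permutes into dominant blocks and tracks joint multiplicities in order to write the whole set $\mathcal{S}$ explicitly, which is not needed for nonemptiness.
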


Notably, all pairwise-commutative states can be simultaneously diagonalized into 
\begin{equation}\label{mainform}
	\hat{\sigma}_m=\omega_m\hat{P} \rho_m \hat{P}^\dagger ={\rm blkdiag}\{ \underbrace{\hat{\Lambda}^m_1}_{d_1},\cdots,  \underbrace{\hat{\Lambda}^m_{i=m}}_{d_m},\cdots,\underbrace{\hat{\Lambda}^m_{M}}_{d_M} \}
\end{equation}
by one single unitary transformation $\hat{P}$. $\hat{\Lambda}_{i}^{m}$ represents a $d_i$-dimensional diagonal matrix, and especially, the block $\hat{\Lambda}^m_{m}$ is element-wise dominant of $\hat{\sigma}_m$, in the sense that the $k$-th diagonal element $\hat{\lambda}^m_k$ of $\hat{\sigma}_m$ is dominant if it is larger than those of the rest $\hat{\sigma}_{m'}$, i.e., $\hat{\lambda}_{k}^m\geq\hat{\lambda}_{k}^{m'}$ for all $m'\neq m$ and strictly holds for $m^\prime<m$.  As the diagonal $\hat{\sigma}_m$ can be faithfully described by vector $\boldsymbol{r}_m=(\hat{\lambda}^m_1,\cdots,\hat{\lambda}^m_D)^\top \in \mathbb{R}^D$, all $\hat{\sigma}_m$s give rise to a compact matrix
\begin{align*}
	\hat{G}_\rho=\begin{bmatrix} \boldsymbol{r}^\top_1\\ \boldsymbol{r}^\top_2 \\\cdot \\\cdot \\\cdot \\ \boldsymbol{r}^\top_M \end{bmatrix}=[\underbrace{\hat{\boldsymbol{\lambda}}_1,\cdots,\hat{\boldsymbol{\lambda}}_1}_{p_1},\cdots,\underbrace{\hat{\boldsymbol{\lambda}}_r,\cdots,\hat{\boldsymbol{\lambda}}_r}_{p_r}],
\end{align*}
where $\hat{\boldsymbol{\lambda}}_1,\cdots,\hat{\boldsymbol{\lambda}}_r$ are distinct vectors with multiplicities $p_1,\cdots, p_r$. The derivation is detailed in Appendix~\ref{appendixB}. 

Likewise, all pairwise-commutative observables are simultaneously diagonalized into 
\begin{equation}\label{mainoform}
	\hat{O}_m=\hat{Q}O_m\hat{Q}^\dagger ={\rm blkdiag}\{ \underbrace{\hat{\Omega}_1^m}_{d^\prime_1},\cdots,\underbrace{\hat{\Omega}_m^m}_{d^\prime_m},\cdots,\underbrace{\hat{\Omega}_M^m}_{d^\prime_M}  \},
\end{equation} 
under unitary transformation $\hat{Q}$, with $d^\prime_j$-dimensional diagonal blocks $\hat{\Omega}^m_j$ and the dominant $\hat{\Omega}_m^m$ for $\hat{O}_m$, and correspondingly, another compact matrix is generated as
\begin{equation*}
	\hat{G}_o=[\underbrace{\hat{\boldsymbol{o}}_1,\cdots,\hat{\boldsymbol{o}}_1}_{q_1},\cdots,\underbrace{\hat{\boldsymbol{o}}_s,\cdots,\hat{\boldsymbol{o}}_s}_{q_s}].
\end{equation*}
where $\hat{\boldsymbol{o}}_1,\cdots,\hat{\boldsymbol{o}}_r$ are distinct vectors with multiplicities $q_1,\cdots, q_s$. 

The set of simultaneous critical points is then explicitly given as
\begin{align} \label{solution2main}
	\mathcal{S}=\{\hat{Q}^\dagger U^\dagger_q\pi U_p \hat{P}: U_q\in \mathcal{U}(\boldsymbol{q}), U_p\in \mathcal{U}(\boldsymbol{p}), \pi\in \mathcal{P}\}
\end{align}
with product groups $\mathcal{U}(\boldsymbol{q})=\mathcal{U}(q_1)\times\cdots\times\mathcal{U}(q_s)$ and $\mathcal{U}(\boldsymbol{p})=\mathcal{U}(p_1)\times\cdots\times\mathcal{U}(p_r)$. It admits the same form as Eq.~(\ref{solution1}) derived for the case of $M=1$, but differs in product unitary transformations. Substituting any simultaneous critical point $U^\prime=\hat{Q}^\dagger U^\dagger_q\pi U_p \hat{P}$ into $F$ yields
\begin{equation}
	F(U^\prime)=\sum_{m=1}^M \tr(\pi\hat{\sigma}_m\pi^\dagger\hat{O}_m)=\sum_{m=1}^M\sum_{k=1}^D\hat{\lambda}_{(k)}^m\hat{o}^m_k,
\end{equation}
and into the Hessian form~(\ref{Hess}) yields
\begin{align}
	&h_{U^\prime}(A)=\sum_{m=1}^M \tr[\pi\hat{\sigma}_m\pi^\dagger(\hat{A}\hat{O}_m\hat{A}-\hat{O}_m\hat{A}^2)]\nonumber\\
	=&\sum_{1\leq k\leq k'\leq D}\sum_{m=1}^M -(\hat{\lambda}_{(k)}^m-\hat{\lambda}_{(k')}^m)(\hat{o}_{k}^m-\hat{o}_{k'}^m)(\hat{x}_{kk'}^2+\hat{y}_{kk'}^2) \nonumber \\
	\triangleq& \sum_{1\leq k \leq k'\leq D} h_{kk'}(\hat{x}_{kk'}^2+\hat{y}_{kk'}^2), \label{Hessform}
\end{align}
where $\hat{\lambda}_{(k)}^m, \hat{o}_k^m$ are the $k$-th diagonal entry of $\pi\hat{\sigma}_m\pi^\dagger, \hat{O}_m$, and $\hat{x}_{ij}, \hat{y}_{ij}$ the real and imaginary parts of $(i,j)$ element of $\hat{A}=U_q\,\hat{Q} A\,\hat{Q}^\dagger U_q^\dagger$. Furthermore, all eigenvalues of the Hessian matrix~(\ref{Hessmatrix}) are derived as
\begin{align} \label{geigens}
	h_{kk'}=\sum_{m=1}^M-(\hat{\lambda}_{(k)}^m-\hat{\lambda}_{(k')}^m)(\hat{o}_{k}^m-\hat{o}_{k'}^m), 1\leq k\leq k^{\prime}\leq D,
\end{align}
which generalizes Eq.~(\ref{eigs}) from $M=1$ to $M\geq 1$. It implies that $h_{U}^\prime$ is nonnegative (nonpositive) for any $A$ if and only if all $h_{kk'}$s are nonnegative (nonpositive), and $U^\prime$ is a local minimum (maximum) if and only if all $h_{kk'}$s must be nonnegative (nonpositive).

It is finally remarked that each objective term specifies a preferred spectral ordering between $\hat{\sigma}_m$ and $\hat{O}_m$. In contrast to the single-term case, however, the preferred spectral orderings associated with different terms may not be mutually compatible. More specifically, different terms may favor distinct eigenvalue alignments that cannot be simultaneously realized by a common unitary transformation. We refer to this situation as spectral-ordering incompatibility. As will be shown later, such incompatibility plays a central role in the emergence of the richer landscape structures observed in multi-term objectives and ultimately gives rise to FTs.

\subsection{Discovery of false traps}\label{example1}

Consider an example in which $\omega_m=1/3$, and
\begin{align} \label{opt1}
	\rho_1&=\text{diag}\{ 0.4,0.35,0.15,0.1 \}, ~~~~O_1={\rm diag}\{1,0,0,0\},\nonumber\\
	\rho_2&=\text{diag}\{0.35,0.45,0.2,0\}, ~~~~~~O_2={\rm diag}\{0,1,1,0\},\nonumber\\
	\rho_3&=\text{diag}\{0.29,0.41,0.18,0.12\}, ~O_3={\rm diag}\{0,0,0,1\}.\nonumber 		 
\end{align}
Denote by $\pi(i,j)$ the elementary matrix permuting the $i$-th and $j$-th rows of any matrix. Using Theorems~\ref{theo1} and~\ref{theo2}, one can verify that both of the two simultaneous critical points, $U_1=\mathbb{I}$ and $U_2=\pi(3,4)\pi(1,4)$, are local maxima, but lead to different landscape values, $F(U_1)=0.39$ and $F(U_2)=0.36$. It immediately follows that $U_2$ emerges as a FT on the landscape, which is also confirmed by numerical experiments as shown in Fig.~\ref{fig2}. Surprisingly, this is different from the $M=1$ case in which there is no FT as proven in the above subsection.

\begin{figure}
	\begin{center}
		\includegraphics[width= \columnwidth]{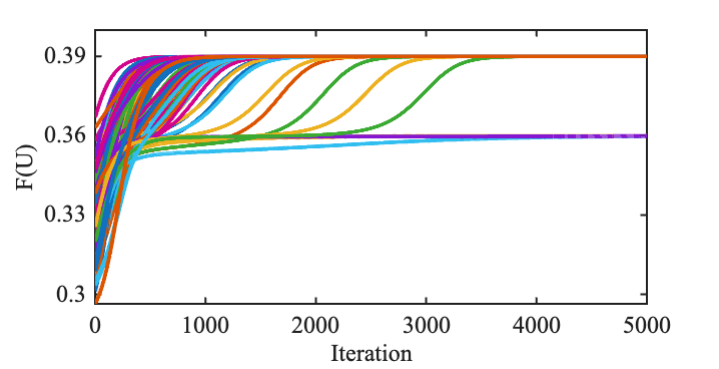}
	\end{center}
	\caption{Numerical simulation on the example~(\ref{opt1}). A gradient-ascend optimizer is utilized to find its maximal value, and $100$ experiments are implemented. It is found that the search for global optimum can be trapped at the critical point $U_2$, leading to $F(U_2)=0.36$ and thus forming a FT.}
	\label{fig2} 
\end{figure}

As noted in Sec.~\ref{M1}, all local optima in the single-term case admit a consistent spectral ordering between the state and the observable, ensuring the absence of FTs. In the above example, the preferred spectral orderings of different objective term are specified by permutations $\pi_1=\mathbb{I}$, $\pi_2=\pi(1,3)$, and $\pi_3=\pi(2,4)$, respectively. Since these permutations correspond to distinct eigenvalue alignments, no single unitary transformation can simultaneously realize all of them. This incompatibility of spectral-ordering preferences induces a competition among the objective terms and leads to emergence of FTs.

 \subsection{Necessary and sufficient condition for false traps}
 
The objective function in~(\ref{linear}) can be rewritten as
 \begin{align}
 F(U)=\sum_{m=1}^M \bar{\omega}_m\tr[U\rho_m U^\dagger \bar{O}_m]+\alpha,
 \end{align}
where $\bar{O}_m=[O_m-\lambda_{\min}(O_m)\mathbb{I}_D]/[\lambda_{\max}(O_m)-\lambda_{\min}(O_m)]$ with the smallest and largest eigenvalue $\lambda_{\min/\max}$ is the rescaled operator, $\bar{\omega}_m=\omega_m[\lambda_{\max}(O_m)-\lambda_{\min}(O_m)]$ the rescaled weight, and $\alpha=\sum_{m=1}^M \omega_m \lambda_{\min}(O_m)$ a constant parameter. Without loss of generality, each $O_m$ can be assumed to satisfy $\boldsymbol{0}\leq O_m\leq \mathbb{I}_D$, and further, $O_m$s can be assumed to form a positive operator-valued measure (POVM), i.e., $\sum_{m=1}^MO_m=\mathbb{I}_D$, by first rescaling all operators to satisfy $\sum^M_mO_m\leq \mathbb{I}_D$ and then adding the constant term $\tr[U(\mathbb{I}/D)U^\dagger O_{M+1}]$ with a new operator $O_{M+1}:=\mathbb{I}_D-\sum^M_m O_m$ into $F(U)$.

 In the case of $M=2$, if operators satisfy $O_1+O_2=\mathbb{I}$, then $F(U)=\tr(U\varrho U^\dagger O_1)+\omega_2$ with $\varrho=\omega_1\rho_1-\omega_2\rho_2$, which reduces to the $M=1$ case. It immediately yields
\begin{Corollary}\label{corollary}
	The optimization landscape of the objective function $F(U)$ in~(\ref{linear}) with $M=2$ is devoid of false traps, if the operator set $\{O_1, O_2\}$ corresponds to a quantum measurement.
\end{Corollary}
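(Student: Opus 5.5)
The plan is to reduce the two-term objective to a single-term one and then invoke the trap-free result of Sec.~\ref{M1}. Since $\{O_1,O_2\}$ is a quantum measurement, we have $O_2=\mathbb{I}_D-O_1$. Substituting this into~(\ref{linear}) and using $\tr[U\rho_2U^\dagger]=\tr\rho_2=1$, valid for every unitary $U$, gives
\begin{align}
F(U)=\omega_1\tr[U\rho_1U^\dagger O_1]+\omega_2-\omega_2\tr[U\rho_2U^\dagger O_1]
=\tr[U\varrho U^\dagger O_1]+\omega_2, \nonumber
\end{align}
where $\varrho:=\omega_1\rho_1-\omega_2\rho_2$. This identity holds pointwise on all of $\mathcal{U}(D)$, not just at critical points. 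Consequently $F$ and $\tilde F(U):=\tr[U\varrho U^\dagger O_1]$ share the same critical points, the same Hessians in~(\ref{Hess}), and the same local and global optima. Their values differ only by the constant $\omega_2$, so a false trap of one is exactly a false trap of the other.

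The main point to check is that the analysis of Sec.~\ref{M1} does not use positivity or unit trace of the state. The operator $\varrho$ is Hermitian but in general neither positive nor normalized. Looking at that analysis, only three ingredients enter. First, the critical-point condition $[U\varrho U^\dagger,O_1]=0$, which follows from Theorem~\ref{theo1} with $M=1$ and linear $f$. Second, the characterization~(\ref{solution1}) of the critical set, which uses only the unitary diagonalization of the two Hermitian operators $\varrho$ and $O_1$. Third, the Hessian eigenvalues $h_{kk'}=-(\lambda_{(k)}-\lambda_{(k')})(o_k-o_{k'})$ from~(\ref{eigs}), which use only the real eigenvalues of $\varrho$. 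So I will write $P\varrho P^\dagger$ in diagonal form with real eigenvalues $\lambda_1>\cdots>\lambda_r$, and $O_1$ likewise, and rerun the same argument verbatim.

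Applying Theorem~\ref{theo2}, a critical point $U'=Q^\dagger U_q^\dagger \pi U_p P$ is a local maximum (minimum) if and only if every $h_{kk'}\le 0$ ($\ge 0$). This forces the permuted eigenvalues $\lambda_{(k)}$ to be ordered in the same (opposite) order as the $o_k$, whenever the corresponding $o$'s differ. Ties among $o_k$ or among $\lambda_{(k)}$ do not affect the value $\sum_k\lambda_{(k)}o_k$. Hence every local maximum attains the same value $\sum_k\lambda_k^{\downarrow}o_k^{\downarrow}$, and every local minimum attains $\sum_k\lambda_k^{\uparrow}o_k^{\downarrow}$.

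To finish, I will argue that these common values are the global extrema. $\mathcal{U}(D)$ is compact, so the global maximum and minimum of $F$ are attained, and any global optimum is in particular a local optimum. Therefore all local optima coincide in value with the global one, and no false trap exists. Under Assumptions~\ref{assm1} and~\ref{assm2} the same conclusion transfers to $F(\boldsymbol\theta)$ by the landscape equivalence of Theorems~\ref{theo1} and~\ref{theo2}.

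The only step requiring genuine care is the tie-handling in the ordering argument: I must confirm that degenerate eigenvalues allow no local optimum with a strictly different value. The value is invariant under swapping equal entries, so this should be routine.
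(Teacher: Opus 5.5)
Your proposal is correct and is essentially the paper's argument: substituting $O_2=\mathbb{I}_D-O_1$ and $\tr\rho_2=1$ gives $F(U)=\tr[U\varrho U^\dagger O_1]+\omega_2$ with $\varrho=\omega_1\rho_1-\omega_2\rho_2$, after which the trap-free $M=1$ analysis of Sec.~\ref{M1} applies. The paper states this reduction in one line, so your extra checks add detail rather than a different route: that Sec.~\ref{M1} uses only Hermiticity of $\varrho$ (not positivity or normalization), that ties cannot change the common value $\sum_k\lambda_k^{\downarrow}o_k^{\downarrow}$ (only the necessary direction of the Hessian test is needed here), and that compactness makes this common value the global one.
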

 
Next, we focus on the $M\geq 3$ case in which the operator set $\{O_m\}$ in $F(U)$ describes a quantum measurement. 

\begin{figure}
	\begin{center}
		\includegraphics[width= 0.4\columnwidth]{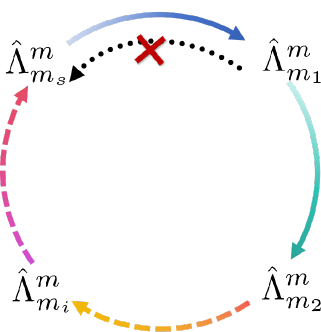}
	\end{center}
	\caption{The element exchange among different blocks of $\hat{\sigma}_m$ in Eq.~(\ref{mainform}) induced by simultaneous critical points. $\hat{\Lambda}_i^m\rightarrow\hat{\Lambda}_j^m$ denotes the event that at least one diagonal element of the block $\hat{\Lambda}_i^m$ is transferred to the block $\hat{\Lambda}_i^m$, and it is proven in Theorem~\ref{theo3} that FTs emerge on the optimization landscape if and only if there exist critical points which induce directed cycles.}
	\label{fig3}
\end{figure}
 
\subsubsection{Projective measurements}

Assume that $O_m$s in the objective function~(\ref{linear}) form a projective measurement, i.e., $O_mO_{m'}=\delta_{mm'}O_m$ and $\sum_m O_m=\mathbb{I}_D$. It is easy to obtain that each $O_m$ is diagonalized into $\hat{O}_m$ as Eq.~(\ref{mainoform}), with the dominant block $\hat{\Omega}_m^m=\mathbb{I}_{d_m^\prime}$ and the rest being zero matrices $\hat{\Omega}_{m'}^m=\boldsymbol{0}_{d^\prime_{m'}}$, and an upper bound is obtained as
\begin{align}
	F(U^\prime)=\sum_{m=1}^M \tr(\pi\hat{\sigma}_m\pi^\dagger\hat{O}_m)\leq \sum_{m=1}^M \tr(\hat{\Lambda}_m^m)
\end{align}
 for any simultaneous critical point $U^\prime=\hat{Q}^\dagger U^\dagger_q\pi U_p \hat{P}$. If $\hat{\sigma}_m$ and $\hat{O}_m$ share the same block structure, i.e., $d_m^\prime=d_m$ for $m=1,\cdots,M$, this bound is reached and the measurement is called optimal.

Note further that $U^\prime$ acting on $\omega\rho_m$ amounts to the permutation $\pi$ acting on $\hat{\sigma}_m$, thereby swapping diagonal elements either within the same block or across different blocks of $\hat{\sigma}_m$. Interestingly, we are able to show that the pattern of such inter-block transfers has close connections to the emergence of FTs.

\begin{Theorem}\label{theo3}
When each diagonalized state $\hat{\sigma}_m$ in the objective function~(\ref{linear}) shares the same block structure with the corresponding diagonalized operator $\hat{O}_m$, any local maximum point given by $\hat{Q}^\dagger U_q^\dagger\pi U_p \hat{P}$ is a FT on the optimization landscape if and only if the permutation $\pi$ induces a unidirectional cyclic element exchange among at least three blocks $\Lambda_{m_i}^m$ of $\hat{\sigma}_m$ as Eq.~(\ref{mainform}). 
\end{Theorem}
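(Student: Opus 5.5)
The plan is to reduce everything to the combinatorics of the permutation $\pi$. I would use the explicit Hessian eigenvalues~(\ref{geigens}) for the local-maximum condition, and an elementary upper bound for the global value. Throughout, $\hat{O}_m$ is the projector onto the $m$-th block, and that block has size $d_m$ (the hypothesis $d'_m=d_m$).

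\emph{Step 1: identify the global maximum.} Since the $\hat\sigma_m$ are simultaneously diagonal, I would define $\tau=\mathrm{diag}\{\max_m\hat\lambda^m_1,\dots,\max_m\hat\lambda^m_D\}$. By the dominance structure of Eq.~(\ref{mainform}), $\tr\tau=\sum_m\tr\hat\Lambda^m_m$. Because $\tau\geq\hat\sigma_m$ for every $m$ and $O_m\geq 0$ with $\sum_mO_m=\mathbb{I}$, we get for any unitary $V$
\begin{align*}
F(V)=\sum_m\tr[V\hat\sigma_mV^\dagger\hat O_m]\leq\sum_m\tr[V\tau V^\dagger\hat O_m]=\tr\tau .
\end{align*}
This bound is attained at $\pi=\mathbb{I}$. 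So the global maximum over all of $\mathcal{U}(D)$, not only over simultaneous critical points, equals $\sum_m\tr\hat\Lambda^m_m$, and a local maximum is a FT iff its value falls strictly below this.

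\emph{Step 2: translate local maximality and the value into an exchange graph.} For a simultaneous critical point $U'$, write $j=\pi^{-1}(k)$ for the original index landing in slot $k$. Let $c(j)$ be its original block and $a(j)$ its target block. Then
\begin{align*}
F(U')-\tr\tau=\sum_j\big(\hat\lambda^{a(j)}_j-\hat\lambda^{c(j)}_j\big)\leq 0 .
\end{align*}
Each term is nonpositive, and it is strictly negative when $a(j)<c(j)$ by the strict part of dominance. From Eq.~(\ref{geigens}), $h_{kk'}=0$ within a block. For slots $k,k'$ in distinct target blocks $a,b$, holding original indices $j,j'$, we have $h_{kk'}=-(\hat\lambda^a_j-\hat\lambda^b_j)+(\hat\lambda^a_{j'}-\hat\lambda^b_{j'})$. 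Hence, by Theorem~\ref{theo2}, $U'$ is a local maximum iff $\hat\lambda^a_j+\hat\lambda^b_{j'}\geq\hat\lambda^b_j+\hat\lambda^a_{j'}$ for all such pairs; that is, no transposition of two elements sitting in different blocks increases $F$. I would then form the directed multigraph on blocks with one edge $c(j)\to a(j)$ per displaced element. Since $d'_m=d_m$, every block sends out as many elements as it receives, so the graph is balanced and decomposes into directed cycles.

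\emph{Step 3: the two directions.} Suppose the graph contains a $2$-cycle: an element $j$ from block $a$ sits in $b$ and an element $j'$ from $b$ sits in $a$. The local-maximum inequality for this pair, combined with dominance ($\hat\lambda^a_j\geq\hat\lambda^b_j$ and $\hat\lambda^b_{j'}\geq\hat\lambda^a_{j'}$), forces equality everywhere. Hence swapping $j,j'$ back leaves $F$ unchanged, and the strict-dominance clause in fact rules out such a pair whenever the relevant index inequality is strict. So if no directed cycle of length $\geq3$ exists, every cycle in the decomposition is a $2$-cycle. Undoing these swaps one at a time preserves $F$ and ends at the block-preserving assignment with value $\tr\tau$, so $U'$ is not a FT. Conversely, suppose there is a unidirectional cycle through at least three blocks. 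Any directed cycle through distinct blocks must contain an edge $c(j)\to a(j)$ with $a(j)<c(j)$. That edge contributes a strictly negative term above, so $F(U')<\tr\tau$ and the local maximum is a FT.

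The main obstacle I expect is the bookkeeping around ties and degeneracies. The cycle decomposition of a balanced multigraph is not unique, and the ``iff'' must be phrased via the existence of a directed cycle of length $\geq3$ in the exchange graph rather than via a particular decomposition. I also need to check that the freedom $U_q,U_p$ only permutes within blocks and so does not alter the exchange graph. Finally, the strict-versus-weak dominance convention ($\hat\lambda^m_k>\hat\lambda^{m'}_k$ only for $m'<m$) must yield strict loss on every genuine cycle while allowing harmless ties on $2$-cycles. I would settle that first by a careful case check on a single cycle.
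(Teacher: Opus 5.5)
Your proposal is correct and follows essentially the same route as the paper. The Hessian coefficient $h_{kk'}$ for a pair of elements exchanged between two blocks, combined with dominance, rules out two-block exchanges at a local maximum. A block-preserving $\pi$ attains $\sum_m\tr\hat\Lambda^m_m$. Any directed cycle through at least three blocks contains a descending edge, whose strict dominance gap gives a strict loss.

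Your $\tau$-bound is a genuine improvement: it shows $\sum_m\tr\hat\Lambda^m_m$ is the global maximum over all of $\mathcal{U}(D)$, which the paper's ``thus globally optimal'' step needs but the paper only bounds over simultaneous critical points. Your hedge in Step 3 is unnecessary, because for $a\neq b$ one of the two dominance inequalities is always strict, so two-block exchanges never occur at a local maximum.
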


Denote by $\Lambda_i^m\rightarrow\Lambda_j^m$ the event that at least one diagonal element of the block $\Lambda_i^m$ is transferred to the block $\Lambda_{j}^m$ and by $L=(\overrightarrow{m_1,\cdots,m_s})$ a closed loop $\Lambda^{m}_{m_1}\rightarrow \Lambda^{m}_{m_2}\rightarrow\cdots\rightarrow\Lambda^{m}_{m_s}\rightarrow \Lambda^{m}_{m_1}$. The above theorem provides a geometric interpretation that an FT emerges on the landscape if and only if the critical point induces a directed cycle, as illustrated in Fig.~\ref{fig3}. Especially, there exist only two closed loops $(\overrightarrow{1,2,3})$ and $(\overrightarrow{1,3,2})$ in $M=3$, and we have
\begin{Corollary}\label{corollary2}
False traps emerge on the optimization landscape of maximizing $F(U)$ in~(\ref{linear}) with $M=3$, if and only if there exists a simultaneous critical point inducing a loop $L=(\overrightarrow{m_1,m_2,m_3})$ such that
	\begin{equation}\label{cyclic}
		\lambda_{k_i}^{m_i}-\lambda_{k_i}^{m_{i+1}}\leq \lambda_{k_{i-1}}^{m_i}-\lambda_{k_{i-1}}^{m_{i+1}},~ i=1,2,3,
	\end{equation}
	with $m_0=m_3, m_4=m_0, k_i=\arg\min_{k\in\mathcal{I}_{m_i}}\lambda_k^{m_i}-\lambda_k^{m_i+1}$, and $\mathcal{I}_m=[\sum_{m'=1}^{m-1}d_{m'}+1,\sum_{m'=1}^md_{m'}]$.
\end{Corollary}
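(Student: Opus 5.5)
We treat Corollary~\ref{corollary2} as the $M=3$ instance of Theorem~\ref{theo3}, so the work reduces to turning the abstract ``unidirectional cycle plus local maximality'' criterion into the explicit inequalities~(\ref{cyclic}). We work in the projective-measurement, optimal-block setting used for Theorem~\ref{theo3}: $\hat{O}_m$ is the projector onto the index block $\mathcal{I}_m$, and $d_m'=d_m$. Then $F(U')=\sum_m\sum_{k\in\mathcal{I}_m}\hat{\lambda}^m_{(k)}$, and by Eq.~(\ref{geigens}) the Hessian eigenvalues reduce to the following. For $k,k'$ in the same block $\mathcal{I}_m$ the eigenvalue vanishes. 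For $k\in\mathcal{I}_{m}$ and $k'\in\mathcal{I}_{m'}$ with $m\neq m'$,
\begin{equation*}
h_{kk'}=-\big[(\hat{\lambda}^m_{(k)}-\hat{\lambda}^m_{(k')})-(\hat{\lambda}^{m'}_{(k)}-\hat{\lambda}^{m'}_{(k')})\big].
\end{equation*}
A simultaneous critical point is therefore a local maximum if and only if every such $h_{kk'}\le 0$. In words, no swap of one element between two blocks may increase $F$.

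First we use Theorem~\ref{theo3} to enumerate the possible FTs when $M=3$. A FT must induce a unidirectional cyclic exchange among at least three blocks. With three blocks, the only directed cycles through all of them are $(\overrightarrow{1,2,3})$ and $(\overrightarrow{1,3,2})$. Any purely two-way exchange $\Lambda_i\leftrightarrow\Lambda_j$ cannot yield a FT, so the ``only if'' direction forces the critical point to induce a loop $L=(\overrightarrow{m_1,m_2,m_3})$. Without loss of generality we take the minimal instance, in which exactly one element moves along each arrow: an element of $\mathcal{I}_{m_i}$ is carried into block $m_{i+1}$. Multiple transfers along the same arrow should only weaken the inequalities to be checked, by the reasoning in the third step.

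Second, we write the local-maximality conditions for this cyclic configuration. Let $k_i\in\mathcal{I}_{m_i}$ denote the original index of the element that is moved from block $m_i$ into block $m_{i+1}$. It is then positioned in $\mathcal{I}_{m_{i+1}}$ opposite the element $k_{i+1}$ moved out of there. The relevant Hessian entries are those pairing the position holding $k_{i-1}$ (now inside block $m_i$) with positions in neighbouring blocks. Computing these with the formula above, each nonpositivity condition becomes a comparison of ``gain of keeping an element in its dominant term versus the next term''. This gives $\lambda^{m_i}_{k}-\lambda^{m_{i+1}}_{k}\le\lambda^{m_i}_{k_{i-1}}-\lambda^{m_{i+1}}_{k_{i-1}}$ for all $k$ still in the relevant block. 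The strongest of these constraints is attained by the minimizer $k_i=\arg\min_{k\in\mathcal{I}_{m_i}}(\lambda^{m_i}_k-\lambda^{m_{i+1}}_k)$, which yields exactly~(\ref{cyclic}) for $i=1,2,3$. Here block-interior entries give zero eigenvalues, and entries between unmoved elements are automatically nonpositive by the dominance ordering in~(\ref{mainform}). Conversely, if~(\ref{cyclic}) holds, we construct the permutation $\pi$ that cycles precisely these $k_i$. The computation above then shows all $h_{kk'}\le0$, so this is a local maximum inducing a three-cycle, and Theorem~\ref{theo3} gives that it is a FT.

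The main obstacle is the second step. One must identify which of the many $h_{kk'}$ are binding and show that every remaining entry is nonpositive automatically. This includes the entries involving moved elements paired with arbitrary unmoved ones, and it is also needed to justify that the loop can be reduced to single-element transfers chosen at the argmin indices. I would handle this by sorting all pairs $(k,k')$ into three types: intra-block pairs, pairs of unmoved elements in different blocks, and pairs containing a moved element. I would dispose of the first two types using the dominance property. For the third type I would show that the condition is monotone in the choice of the moved element, so that the argmin choice is the extremal case.
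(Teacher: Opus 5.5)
Your route is the same as the paper's: Theorem~\ref{theo3} forces a directed three-cycle when $M=3$, (\ref{cyclic}) is read off from nonpositivity of the cross-block $h_{kk'}$, and sufficiency uses the explicit cycle $\pi(k_1,k_3)\pi(k_1,k_2)$ on the argmin indices. However, the step you single out as the main obstacle is argued backwards, and the monotonicity fix you propose does not close it.

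Write $g_i(k)=\lambda^{m_i}_k-\lambda^{m_{i+1}}_k$. By dominance, $g_i\geq 0$ on $\mathcal{I}_{m_i}$ and $g_i\leq 0$ on $\mathcal{I}_{m_{i+1}}$. Pairing the position holding $k_{i-1}$ (now in block $m_i$) with the current contents of block $m_{i+1}$ gives $g_i(k)\leq g_i(k_{i-1})$ for every $k$ now in block $m_{i+1}$. In a family of upper-bound constraints, the binding one is the \emph{maximizer} of the left side. Here that is the element actually moved in from $\mathcal{I}_{m_i}$; the minimizer gives the weakest member, not the strongest. So your computation yields (\ref{cyclic}) only with $k_i$ equal to whatever element the critical point happens to move, and nothing yet identifies it with $\arg\min_{k\in\mathcal{I}_{m_i}}g_i(k)$.

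Monotonicity cannot bridge this in the ``only if'' direction. There the moved elements $j_i$ are dictated by the given false trap, not chosen by you. Replacing $j_i$ by the argmin lowers the left side of the $i$-th inequality, but the right side changes from $g_i(j_{i-1})$ to $g_i(k_{i-1})$. Since $k_{i-1}$ minimizes the \emph{different} function $g_{i-1}$, no monotone relation links these two values. For the same reason, reducing to one transfer per arrow is not ``without loss of generality'' in that direction.

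The missing ingredient is the set of Hessian entries pairing a moved element $j_i$, now in block $m_{i+1}$, with the elements of $\mathcal{I}_{m_i}$ that stayed in block $m_i$. There $h_{kk'}\leq 0$ reads $g_i(j_i)\leq g_i(k)$, so the elements transferred along each arrow carry the smallest values of $g_i$ on $\mathcal{I}_{m_i}$. To conclude that a minimizer is among them, you also need that no element of $\mathcal{I}_{m_i}$ goes to block $m_{i-1}$. This follows from the no-two-way-exchange argument in the proof of Theorem~\ref{theo3}, since such a transfer would reverse the arrow $m_{i-1}\to m_i$. With $k_{i-1}$ and $k_i$ taken among the moved elements in this way, the entry pairing them gives (\ref{cyclic}) directly, for any number of transfers per arrow.

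In the ``if'' direction, the pairs formed by a moved element and the unmoved elements of the third block (neither its origin nor its destination) are not settled by dominance alone. For example, $\lambda^{m_2}_{k_1}-\lambda^{m_3}_{k_1}$ has no a priori sign. Its nonnegativity, and hence $h_{kk'}\leq 0$ for those pairs, comes from chaining (\ref{cyclic}) at $i=2$ with $g_2(k_2)\geq 0$.
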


The proofs of Theorem~\ref{theo3} and Corollary~\ref{corollary2} are deferred to Appendix~\ref{appendixC1} and~\ref{appendixC2}. It is pointed out that there are possibly many FTs, leading to different landscape values. To illustrate the non-uniqueness of FTs, consider another example
\begin{eqnarray*} 
	\omega_1&=&\omega_2=\omega_3=1/3, \\
	\rho_1&=&{\rm diag}\{ 0.23,0.35,0.17,0.25,0,0\},\\
	\rho_2&=&{\rm diag}\{0,0,0.27,0.3,0.22,0.21\}, \\
	\rho_3&=&{\rm diag}\{ 0.15,0.26,0,0,0.35,0.24 \},\\
	O_1&=&{\rm diag}\{  1,1,0, 0,0,0\},\\
	O_2&=&{\rm diag}\{  0, 0,1,1,0,0\},\\
	O_3&=&{\rm diag}\{  0, 0,0,0,1,1\},
\end{eqnarray*} 
which admits a global maximum value $0.58$. One can verify that both $U_1=\pi(4,6)\pi(1,4)$ and $U_2=\pi(3,5)\pi(2,3)U_1$ are local maximum points, however, they lead to different landscape values, $F(U_1)=79/150$ and $F(U_2)=0.42$, thus forming two distinct FTs.

\subsubsection{General measurements}

If $O_m$s are elements of a POVM, then it follows directly from the Naimark dilation theorem that by coupling to an $M$-dimensional ancillary system, $F(U)$ can be purified to $\sum_{m=1}^M \omega_m\tr\left[\bar{U}\bar{\rho}_m\bar{U}^\dagger \bar{O}_m\right],$ where $\bar{\rho}_m=\rho_m\otimes |0\rangle\langle 0|$, operators $\bar{O}_m=\mathbb{I}_N\otimes |m\rangle\langle m|$ are elements of a projective measurement and satisfy $\sum_m \bar{O}_m=\mathbb{I}_{DM}$, and the extended unitary ansatz $\bar{U}=(U\otimes \mathbb{I}_M)V$ with $V(\rho_m\otimes|0\rangle\langle 0|)V^\dagger=\sum_{m'}\sqrt{O_{m'}}\rho_m\sqrt{O_{m'}}\otimes |m'\rangle\langle m'|$. It is obvious that the attainable $\bar{U}$ constitute only a subset of $\mathcal{U}(DM)$, thus leading a constrained optimization problem which violates Assumptions~\ref{assm1} and~\ref{assm2}.

 For the above objective function with the constrained ansatz $\bar{U}$, the presence of FTs is a generic feature of the optimization landscape~\cite{Ge2022,Ge2021}, confirmed by the following example
\begin{eqnarray*}
	\omega_1&=&\omega_2=0.25,~~\omega_3=0.5,\\
	\rho_1&=&{\rm diag}\{ 1,0,0,0\},\\
	\rho_2&=&{\rm diag}\{ 0,1,0,0\},\\
	\rho_3&=&{\rm diag}\{0,0,0.8,0.2\},\\
	O_1&=&{\rm diag}\{0.3,0.2,0.4,0.3\},\\
	O_2&=&{\rm diag}\{0.35,0.45,0.25,0.5\},\\
	O_3&=&{\rm diag}\{0.35,0.35,0.35,0.2\}.
\end{eqnarray*}
It is easy to verify that both $U_1=\pi(2,4)\pi(1,4)\pi(1,3)$ and $U_2=\pi(1,4)$ are local maxima, but lead to different landscape values, $F(U_1)=0.4$ and $F(U_2)=0.3625$, implying that $U_2$ forms a FT on the landscape.

\subsection{Distinguishability v.s. false traps}

As FTs originate from incompatible spectral orderings of different objective terms, it naturally raises the question of whether such incompatibility can be avoided entirely.  A particular scenario is considered where the states and observables are perfectly distinguishable.

\begin{Definition}
 The states $\rho_m$s are perfectly distinguishable if they satisfy
\begin{align}\label{distinguish}
	\tr[\rho_m\rho_{m'}]=0, ~~\forall~ m \neq m'.
\end{align} 
Similarly, operators $O_m$s are perfectly distinguishable if $\tr[O_m O_{m'}]=0$ for $m\neq m'$. 
\end{Definition}

With the perfect distinguishability assumption, we can show

\begin{Theorem}\label{theo4}
	If both states and observables are perfectly distinguishable, i.e., $\tr[\rho_m\rho_{m'}]=0$ and $\tr[O_mO_{m'}]=0$ for any $m\neq m'$, then the optimization landscape of $F(U)$ in~(\ref{linear}) is devoid of false traps formed by simultaneous critical points. That is, any simultaneous critical point is either a globally optimal point or a saddle.
\end{Theorem}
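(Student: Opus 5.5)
Perfect distinguishability of positive semidefinite operators forces orthogonal supports: $\tr[\rho_m\rho_{m'}]=0$ with $\rho_m,\rho_{m'}\ge0$ implies $\rho_m\rho_{m'}=0$, and likewise $O_mO_{m'}=0$. The plan is to use this to reduce every Hessian eigenvalue in Eq.~(\ref{geigens}) to a single term, and then argue that a local maximum (minimum) must realize each term's preferred spectral ordering. I treat maximization; minimization is symmetric.

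\textbf{Step 1 (block normal form).} The states are pairwise commutative, and so are the observables. By Lemma~\ref{lemma1} the simultaneous critical points are exactly the set~(\ref{solution2main}). Because the supports are orthogonal, $\hat P$ can be chosen so that the diagonal index set $\{1,\dots,D\}$ splits into disjoint blocks $\mathcal{I}_1,\dots,\mathcal{I}_M,\mathcal{I}_0$. Here $\hat\sigma_m$ is supported only on $\mathcal{I}_m$, and $\mathcal{I}_0$ carries the common kernel. Similarly $\hat O_m$ is supported on $\mathcal{J}_m$, with the remaining indices $\mathcal{J}_0$. Absorbing the signs of $\omega_m$, I assume $\omega_m>0$ for the maximization, so the nonzero entries of $\hat\sigma_m$ are positive; the case $\omega_m<0$ is handled by reflecting the ordering argument below.

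\textbf{Step 2 (Hessian reduction).} Fix a simultaneous critical point with permutation $\pi$. For each diagonal position $k$, at most one $m$ has $\hat\lambda^m_{(k)}\neq0$, and at most one $m'$ has $\hat o^{m'}_k\neq0$. Hence in
\[
h_{kk'}=-\sum_m(\hat\lambda^m_{(k)}-\hat\lambda^m_{(k')})(\hat o^m_k-\hat o^m_{k'})
\]
only a few terms survive. I will use the following test pairs. Let $k$ be a position holding a state entry of term $m$ that is not fully aligned. Pair it with a position $k'$ where $\hat o^m_{k'}>\hat o^m_k$ and $\hat\lambda^{m}_{(k')}<\hat\lambda^m_{(k)}$. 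A local maximum requires $h_{kk'}\le 0$. Checking the cross terms, which are nonzero only when positions $k,k'$ carry entries of other terms, I aim to show the following: if position $k'$ is empty or holds a state value of a different term $m''$, then the contributions of $m''$ enter with $\hat o^{m''}$ at the positions where its own observable lives. With care, a swap strictly increases $F$, contradicting local maximality.

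\textbf{Step 3 (local max $\Rightarrow$ global).} From Step 2, at a local maximum each term $m$ individually has $\pi\hat\sigma_m\pi^\dagger$ ordered consistently with $\hat O_m$. The key point is that two terms never compete for the same positions in a way that forces a trade-off, since each position contributes to at most one $\hat O_m$ and carries at most one $\hat\sigma_m$. I then show that $F$ at such a point equals the upper bound $\sum_m\sum_j \lambda^{m}_{j}{}^{\downarrow}o^{m}_{j}{}^{\downarrow}$, where $j$ runs up to $\min(\mathrm{rank}\,\rho_m,\mathrm{rank}\,O_m)$. This bound holds globally by the von Neumann trace inequality applied term by term, so the point is a global maximum. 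Any critical point that is neither a local maximum nor a local minimum is a saddle by Theorem~\ref{theo2}.

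\textbf{Main obstacle.} The hard part is the exchange argument in Steps 2 and 3. One must show that at a non-global simultaneous critical point some single transposition $(k,k')$ gives $h_{kk'}>0$, even when the position that should receive term $m$'s large eigenvalue is occupied by an entry of a different term $m''$. This is exactly the cyclic situation of Theorem~\ref{theo3}. The route I would try first: let $m''$ sit in $\mathcal{J}_m$. Because $\hat O_{m''}$ vanishes on $\mathcal{J}_m$, the entry of $m''$ contributes zero there. Therefore moving term $m$'s entry into that slot, and sending the $m''$ entry into $k$, changes only term $m$'s contribution, which yields $h_{kk'}=-(\hat\lambda^m_{(k)}-0)(\hat o^m_k-\hat o^m_{k'})>0$. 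This positivity is what breaks the directed cycles that produce false traps.
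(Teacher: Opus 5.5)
Your core argument is correct for $\omega_m>0$ and is essentially the paper's proof. Orthogonal supports collapse each $h_{kk'}$, and a discordant pair within a single term forces $h_{kk'}>0$. So at a local maximum every term is similarly ordered with its observable and attains its own von Neumann maximum, and the sum of these maxima bounds $F$. Your explicit von Neumann bound is cleaner than the paper's closing formula.

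One step is misstated in a way that matters. For a discordant pair ($\hat o^m_{k'}>\hat o^m_k$, $\hat\lambda^m_{(k')}<\hat\lambda^m_{(k)}$), the swap does not change only term $m$. In general
\[
h_{kk'}=(\hat\lambda^m_{(k)}-\hat\lambda^m_{(k')})(\hat o^m_{k'}-\hat o^m_k)+\sum_{m''\neq m}\hat\lambda^{m''}_{(k')}\,\hat o^{m''}_k ,
\]
and the extra term is nonnegative only because $\omega_{m''}>0$. Hence your remark that negative weights are ``handled by reflecting'' is false for mixed signs; the paper's proof tacitly assumes increasing $f_m$. Take $D=3$, $\omega_1=1$, $\omega_2=-1$, $\rho_1={\rm diag}\{1,0,0\}$, $\rho_2={\rm diag}\{0,0.8,0.2\}$, $O_1={\rm diag}\{0.5,0,0\}$, $O_2={\rm diag}\{0,1,0\}$. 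The simultaneous critical point $\pi(1,2)$ has $h_{12}=-0.3$, $h_{13}=0$, $h_{23}=-0.2$ and $F=0$, whereas $F(\pi(2,3))=0.3$. This is a false trap under perfect distinguishability, so you must assume all $\omega_m$ share one sign.

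A second gap you inherit from the paper: Lemma~\ref{lemma1} only gives nonemptiness, and the set~(\ref{solution2main}) does not exhaust the solutions of~(\ref{simcritical}) in general. Take $D=3$ and $\rho_m=O_m=|m\rangle\langle m|$ for $m=1,2$. Then~(\ref{solution2main}) contains only monomial unitaries, since all columns of $\hat G_\rho$ and $\hat G_o$ are distinct. Yet the unitary with $U|1\rangle=(|2\rangle+|3\rangle)/\sqrt2$, $U|2\rangle=|1\rangle$ satisfies~(\ref{simcritical}).

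Your exchange argument covers such points if you run it in the joint eigenbasis of $\sigma_m=\omega_mU\rho_mU^\dagger$ and $O_m$, for the offending term only. Suppose $\sigma_me_i=a_ie_i$ and $O_me_i=b_ie_i$ with $a_i>a_j$ and $b_j>b_i$, and take $A=|e_i\rangle\langle e_j|+|e_j\rangle\langle e_i|$.
\begin{itemize}
\item Term $m$ contributes $(a_i-a_j)(b_j-b_i)>0$ to~(\ref{Hess}).
\item For $m'\neq m$, orthogonality of supports gives $\sigma_{m'}e_i=0$ and $O_{m'}e_j=0$, so $\tr[\sigma_{m'}O_{m'}A^2]=0$.
\item For $m'\neq m$, $\tr[\sigma_{m'}AO_{m'}A]\geq 0$ since both factors are positive semidefinite.
\end{itemize}
Thus $h_U(A)>0$, and your Step~3 then applies verbatim.
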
 

The proof is detailed in Appendix~\ref{appendixD}. Theorem~\ref{theo4} establishes a close connection between FT and indistinguishability that the presence of FTs formed by simultaneous critical points can be ascribed to the loss of distinguishability among states and/or operators. To further reveal how state indistinguishability affects the existence of FTs, consider the following example 
\begin{eqnarray}
	\omega_1&=&\omega_2=\omega_3=1/3, \nonumber\\
	\rho_1&=&\text{diag}\{ 1-\epsilon_1, \epsilon_1, 0 \},~ O_1=\text{diag}\{  1,0 ,0\}, \nonumber\\
	\rho_2&=&\text{diag}\{  0, 1-\epsilon_2,\epsilon_2\},~O_2=\text{diag}\{ 0,1,0\},\nonumber \\
	\rho_3&=& \text{diag}\{ \epsilon_3,0,1-\epsilon_3 \},~ O_3=\text{diag}\{0,0,1\}, \label{example}
\end{eqnarray}
with $\epsilon_i\in[0,1/2]$ for $i=1,2,3$. Obviously, the maximal value of $F$ is $1-(\epsilon_1+\epsilon_2+\epsilon_3)/3$. It follows from Eqs.~(\ref{distinguish})  and~(\ref{example}) that states are not distinguishable if and only if $\epsilon_i$s are nonzero, and the indistinguishability degree can be quantified by 
\begin{align}
	\epsilon:=\frac{1}{3}\sum_i\epsilon_i.
\end{align}
It follows also from Theorem~\ref{theo3} that simultaneous critical points form FTs if and only if the inequality
\begin{equation}\label{ineq}
	\epsilon_i+2\epsilon_{i+1}\geq1
\end{equation} 
holds for $i=1,2,3$ and $\epsilon_4=\epsilon_1$. This implies that the presence of FTs requires the state indistinguishability $\epsilon$ to be larger than $1/3$. 

Finally, the numerical evidence given in Appendix~\ref{appendixE} suggests a much stronger result that under perfect distinguishability, no FTs appear on the optimization landscape, even among non-simultaneous critical points, which is formulated as
\begin{Conjecture}\label{conjecture}
	If both states and operators are perfectly distinguishable, i.e., $\tr[\rho_m\rho_{m'}]=0$ and $\tr[O_mO_{m'}]=0$ for any $m\neq m'$, then the optimization landscape of $F(U)$ in~(\ref{linear}) is always devoid of false traps. 
\end{Conjecture}

\section{Discussions}\label{sec5}

We have obtained that FTs can emerge on optimization landscapes of the multi-term objective~(\ref{linear}), and their appearance is attributed to the loss of distinguishability among states and/or operators, in the highly overparameterized regime (i.e., the parameter number is sufficiently large). It immediately follows that sufficient parameterization alone is insufficient to guarantee trap-free landscapes beyond the single-term regime, and explains why the trap-free property is obtained for the $M=1$ case and also why increasing the tunable parameter number alone does not completely mitigate the phenomenon of FTs. 

Our results also reveal a fundamental  mechanism that the emergence of FTs is rooted in the incompatibility of the spectral ordering among different objective terms. This bears a close resemblance to frustration in condensed matter systems~\cite{Mosseri2008,Leon2010}, e.g., a triangular antiferromagnet, where each pair of spins energetically favors antiparallel alignment, yet the triangular geometry prevents all pairwise preferences from being simultaneously satisfied. As a result, the system is forced into a frustrated configuration. Similarly, different objective terms in the optimization problem favor mutually incompatible optimization directions such that the optimization landscape is forced to generate FTs.

Our results have other practical implications for quantum-classical optimization, in addition to designing algorithmic optimizers. As quantum indistinguishability plays a prominent role in the optimization landscape, an alternative way to mitigating FTs is to follow the problem-design principles. Taking quantum machine learning as an example, one can choose proper encoding maps, ancillary systems, and measurement designs to enhance distinguishability and hence to reshape the optimization landscape at the fundamental level. Compared to previous algorithmic strategies such as incorporating random perturbations or momentum into gradients~\cite{Bottou2018,Guo2024}, it has an advantage of not increasing optimization complexity while those algorithms generally require more executions of quantum ansatz and measurement, helpful for near-term quantum devices which suffer from limited coherence time and high operational costs. 

We then clarify that FT is conceptually distinct from another well-known optimization obstacle in VQAs, namely barren plateau (BP), which describes polynomially vanishing gradients with system dimensionality $D$~\cite{McClean2018} and generically arises due to high expressibility of the quantum ansatz~\cite{Harrow2009}. While BPs reflect global flatness of the landscape and constitute a major obstacle in high-dimensional systems, FTs reflect local structural incompatibility among different terms in the objective function and are more prominent in low- and intermediate-dimensional regimes. Thus, these two  represent distinct and complementary challenges for VQAs, and importantly, their distinct origins indicate that strategies effective against one may not necessarily mitigate the other.

Finally, our results obtained for a linear form~(\ref{linear}) can be  extended to the general case where each $f_m$s are monotonically increasing functions. In this scenario, the positive derivative $f^\prime_m$ preserves the sign structure of the gradient contributions in Theorem~\ref{theo1} and the Hessian in Theorem~\ref{theo2}. Consequently, the existence conditions for simultaneous critical points and false traps, as well as the trap-free property under perfect distinguishability, still hold. Again, the landscape complexity and the emergence of FTs are fundamentally governed by the multi-term structure and the distinguishability of states and observables.

 \section{Conclusion}\label{sec6}

We have investigated optimization landscapes of the objective functions~(\ref{fun1}) and~(\ref{Uloss}). In particular, a complete framework is first established to identify all critical points on the optimization landscape via Theorem~\ref{theo1} and to classify them via Theorem~\ref{theo2}, and also provides practical guidance in designing algorithmic optimizers for general objective functions. Then, with the functions $f_m$ assumed to be linear, it is shown in Theorem~\ref{theo3} that the optimization landscape can still encounter FTs formed by simultaneous critical points, yielding a negative answer to the open question of whether the trap-free property holds for the general case $M\geq 1$. Finally, a close connection is revealed that the emergence of FTs is attributed to the loss of distinguishability among states and/or operators in the objective function, by obtaining in Theorem~\ref{theo4} shows that FTs formed by simultaneous critical points are absent on the landscape with perfect distinguishability and further generalizing as Conjecture~\ref{conjecture} that FTs are always absent with the perfect distinguishability condition.

Our work deepens the understanding of quantum-classical optimization and paves ways to developing more efficient and reliable quantum algorithms. Future works are left to explore the prevalence of FTs under relaxed assumptions,to integrate distinguishability-aware design with existing methods to mitigate FTs and BPs, and also to experimentally verify the landscape properties on near-term quantum devices.

\begin{acknowledgments} 
	This work is financially supported by National Key R\&D Program of China No. 2025YFE0217200, Quantum Science and Technology-National Science and Technology Major Project No. 2023ZD0301400 and No. 2023ZD0300600, Guangdong Provincial Quantum Science Strategic Initiative No. GDZX2303007, Hong Kong Research Grant Council (RGC) No. 15213924, and the CAS AMSS-polyU Joint Laboratory of Applied Mathematics.
\end{acknowledgments}

\appendix

\section{Matrix vectorization and duplication matrices}\label{proof1}

Given an arbitrary matrix $X=(x)_{D\times D}\in \mathbb{R}^{D\times D}$, its vectorization is given by 
\begin{align}
	{\rm vec}(X)=(x_{11},\cdots, x_{1D},\cdots, x_{D1},\cdots,x_{DD})^\top \in \mathbb{R}^{D^2}.
\end{align}
If $X$ is symmetric, i.e., $X^\top=X$, it can be expressed in a more compact form via half-vectorization
\begin{align}
	{\rm vech}(X)&=(x_{11},\cdots, x_{D1},x_{22},\cdots, x_{D2},\cdots,x_{DD})^\top \nonumber \\
&	\in \mathbb{R}^{\frac{D(D+1)}{2}},
\end{align}
by dropping all elements $x_{ij}$ with $j>i$. Indeed, these two vectors obey the exact relation
\begin{align}
  {\rm vec}(X)=D_{\rm sy}{\rm vech}(X)
\end{align}
with the duplication matrix 
\begin{equation}
	D_{\rm sy}=\left[\begin{array}{cccc}
		D_{11}& \boldsymbol{0} &\cdots & \boldsymbol{0}\\
		D_{21} & D_{22} & \cdots & \boldsymbol{0}\\
		\vdots& \vdots & \ddots & \vdots\\
		D_{D1}& D_{D2}& \cdots & D_{DD}
	\end{array}
	\right] \in \mathbb{R}^{D^2\times \frac{D(D+1)}{2}},
\end{equation}
where diagonal blocks are $D_{ii}=\begin{bmatrix}
	\boldsymbol{0}_{(i-1)\times (D+1-i)}\\
	\mathbb{I}_{D+1-i}
\end{bmatrix}$ and off-diagonal blocks $D_{ij}\in\mathbb{R}^{D\times(D+1-j)}$ have entries
\begin{equation*}
	d_{ss'}=\begin{cases}
		1 & \text{if}~ s=j, s'=i-j+1  \\
		0 &\text{otherwise}
	\end{cases}.
\end{equation*}

If $X$ is antisymmetric, i.e., $X^\top=-X$, its vectorization is expressed as 
\begin{align}
	{\rm vech}(X)&=(x_{21},\cdots,x_{D1},x_{32},\cdots,x_{D2},\cdots,x_{D(D-1)})^\top \nonumber \\
	& \in \mathbb{R}^{\frac{D(D-1)}{2}},
\end{align}
where elements $x_{ij}$ with $j\geq i$ are dropped. Similarly, there exists the exact relation
\begin{align}
	{\rm vec}(X)=D_{\rm ay}{\rm vech}(X)
\end{align} 
with duplication matrix
\begin{equation}
	D_{\rm ay}=\left[\begin{array}{cccc}
		D_{11}& \boldsymbol{0} &\cdots & \boldsymbol{0}\\
		D_{21} & D_{22} & \cdots & \boldsymbol{0}\\
		\vdots& \vdots & \ddots & \vdots\\
		D_{D1}& D_{D2}& \cdots & D_{DD}
	\end{array}
	\right] \in \mathbb{R}^{D^2\times \frac{D(D-1)}{2}},
\end{equation}
where diagonal blocks are $D_{ii}=\begin{bmatrix}
	\boldsymbol{0}_{i \times (D-i)}\\
	\mathbb{I}_{D-i}
\end{bmatrix}$
and off-diagonal blocks $D_{ij}\in\mathbb{R}^{D\times(D-j)}$ have entries
\begin{equation*}
	d_{ss'}=\begin{cases}
		-1 & \text{if}~s=j, s'=i-j  \\
		0 &\text{otherwise}
	\end{cases}.
\end{equation*}

\section{The proof of Eq.~(\ref{Hess2})}\label{proof2}

Eq.~(\ref{Hess}) in the main text can be explicitly written as
\begin{align}
	h_{U}(A)&=\sum_{m=1}^{M\geq 1} \omega_m\tr[A\{U\rho_m U^\dagger (AO_m-O_m A)\}] \nonumber \\
	&=\sum_{m=1}^{M\geq 1} \omega_m {\rm vec}(A^\dagger)^\dagger {\rm vec}[U\rho_m U^\dagger (AO_m-O_m A)] \nonumber \\
	&=\sum_{m=1}^{M\geq 1} \omega_m {\rm vec}(A)^\dagger [{\rm vec}(U\rho_m U^\dagger AO_m)\nonumber \\
	&~~~~~~~~~~~~-{\rm vec}(U\rho_m U^\dagger O_m A)].
\end{align}
The first equality follows from $\tr[XY]=\tr[YX]$, the second from $\tr[X^\dagger Y]={\rm vec}(X)^\dagger{\rm vec}(Y)$, and the last from ${\rm vec}(X+Y)={\rm vec}(X)+{\rm vec}(Y)$, for matrices $X, Y$. Following further from the relation ${\rm vec}(XYZ)=(Z^\top\otimes X){\rm vec}(Y)=\mathbb{I}\otimes XY{\rm vec}(Z)$ yields
\begin{align}\label{quard}
	h_{U}(A)&=\sum_{m=1}^{M\geq 1} \omega_m {\rm vec}(A)^\dagger [O_m^\top\otimes U\rho_m U^\dagger {\rm vec}(A)\nonumber \\
	&~~~~~~~~-\mathbb{I}\otimes U\rho_m U^\dagger O_m {\rm vec}(A)] \nonumber \\
	&\equiv{\rm vec}(A)^\dagger T {\rm vec}(A).
\end{align}

Denote by $A_{\rm re}$ and $A_{\rm im}$ the real and imaginary parts of $A$, respectively. If $A$ is Hermitian, i.e., $A^\dagger=A$, then we have $A^\top_{\rm re}=A_{\rm re}$ and $A^\top_{\rm im}=-A_{\rm im}$, obeying  
\begin{equation}\label{half}
	{\rm vec}(A_{\rm re})=D_{\rm sy}{\rm vech}(A_{\rm re}),~{\rm vec}(A_{\rm im})=D_{\rm ay}{\rm vech}(A_{\rm im})
\end{equation}
with duplication matrices  $D_{\rm sy}$ and $D_{\rm ay}$ defined above. Combining Eq.~(\ref{half}) with Eq.~(\ref{quard}) gives rise to
\begin{align}
	& {\rm vec}(A)^\dagger T {\rm vec}(A) \nonumber \\
	=&[{\rm vec}(A_{\rm re})+i{\rm vec}(A_{\rm im})]^\dagger T[{\rm vec}(A_{\rm re})+i{\rm vec}(A_{\rm im})] \nonumber \\
	=&[D_{\rm sy}{\rm vech}(A_{\rm re})-i D_{\rm ay}{\rm vech}(A_{\rm im})]^\top T  \nonumber \\
	&[D_{\rm sy}{\rm vech}(A_{\rm re})+i D_{\rm ay}{\rm vech}(A_{\rm im})] \nonumber \\
	=&\begin{bmatrix}{\rm vech}(A_{\rm re})\\{\rm vech}(A_{\rm im})
	\end{bmatrix}^\top\begin{bmatrix}
		D_{\rm sy}^\top T D_{\rm sy} & i D_{\rm sy}^\top T D_{\rm ay}\\
		- iD_{\rm ay}^\top T D_{\rm sy} & D_{\rm ay}^\top T D_{\rm ay}
	\end{bmatrix}\begin{bmatrix}{\rm vech}(A_{\rm re})\\{\rm vech}(A_{\rm im})
	\end{bmatrix} \nonumber \\ 
	=&\begin{bmatrix}{\rm vech}(A_{\rm re})\\{\rm vech}(A_{\rm im})
	\end{bmatrix}^\top\begin{bmatrix}
		D_{\rm sy}^\top T_{\rm re}D_{\rm sy} & -D_{\rm sy}^\top T_{\rm im}D_{\rm ay}\\
		D_{\rm ay}^\top T_{\rm im} D_{\rm sy} & D_{\rm ay}^\top T_{\rm re} D_{\rm ay} 
	\end{bmatrix}\begin{bmatrix}{\rm vech}(A_{\rm re})\\{\rm vech}(A_{\rm im})
	\end{bmatrix} \nonumber \\
	\equiv& \boldsymbol{v}^\top H \boldsymbol{v}.
\end{align}
The fourth equality is obtained by ignoring the imaginary part of the real $h_U(A)$. Since Hermitian $A$ could be arbitrary in $h_U(A)$, all elements of vector $\boldsymbol{v}=\begin{bmatrix}{\rm vech}(A_{\rm re})\\{\rm vech}(A_{\rm im})\end{bmatrix}\in\mathbb{R}^{D^2}$ could be arbitrarily valued. 

\section{The landscape framework for $M=1$}\label{appendixA.3}

Suppose that the objective function $F(U)$ in the main text has a single quantum state $\rho$ and operator $O$. It follows immediately from Theorem~\ref{theo1} that the critical-point condition~(\ref{condition}) simplifies to
\begin{align}
[U\rho\,U^\dagger, O]=0.\label{c1}
\end{align} 

Further, denote by $P$ the unitary transformation that diagonalizes Hermitian $\rho$ into
\begin{align}
	\hat{\rho}=P\rho P^\dagger&={\rm diag}\{\underbrace{\lambda_1,\cdots,\lambda_1}_{p_1},\cdots,\underbrace{\lambda_r,\cdots,\lambda_r}_{p_r} \} \nonumber \\
	&\triangleq{\rm blkdiag}\{\lambda_1\mathbb{I}_{p_1},\cdots,\lambda_r\mathbb{I}_{p_r}\}, \label{eigrho}
\end{align}
where distinct eigenvalues of $\rho$ are arranged as $\lambda_1>\cdots>\lambda_r$ and their multiplicities $p_1,\cdots,p_r$ satisfy $\sum_i p_i=D$. ${\rm blkdiag}\{ A_1,A_2,\cdots,A_l\}$ describes the block-diagonal matrix composed of $A_i$ along its diagonal. Denote by $Q$ the unitary transformation for $O$ such that
\begin{align}
	\hat{O}=Q\,O\, Q^\dagger&={\rm diag}\{\underbrace{o_1,\cdots,o_1}_{q_1},\cdots,\underbrace{o_s,\cdots,o_s}_{q_s} \} \nonumber \\
	&\triangleq{\rm blkdiag}\{o_1\mathbb{I}_{q_1},\cdots,o_s\mathbb{I}_{q_s}\}, \label{eigo}
\end{align}
where $o_1>\cdots>o_s$ are distinct eigenvalues of $O$, with multiplicities $q_1,\cdots,q_s$ satisfying $\sum_i q_i=D$.  

It is easy to verify that Eq.~(\ref{c1}) is equal to
\begin{equation}\label{condition2}
	[V\hat{\rho} V^\dagger, \hat{O}]=0
\end{equation} 
with $V=Q U P^\dagger$. Then, all critical points belong to
\begin{align} \label{solution}
	\mathcal{S}=\{Q^\dagger U^\dagger_q\pi U_p P: U_q\in \mathcal{U}(\boldsymbol{q}), U_p\in \mathcal{U}(\boldsymbol{p}), \pi\in \mathcal{P}\}.
\end{align}
Here $\mathcal{U}(\boldsymbol{p})$ represents the product group $\mathcal{U}(p_1)\times\cdots\times\mathcal{U}(p_r)$ with $\mathcal{U}(p_i)$ the $p_i$-dimensional unitary group for $i=1,\dots,r$, $\mathcal{U}(\boldsymbol{q})=\mathcal{U}(q_1)\times\cdots\times\mathcal{U}(q_s)$ is similarly defined, and $\mathcal{P}$ is the permutation group. Indeed, noting that $U_{p}={\rm blkdiag}\{U_{p_1},\cdots,U_{p_r}\}$ preserves $\hat{\rho}$, i.e., $U_p\hat{\rho} U^\dagger_p=\hat{\rho}$, $U_{q}={\rm blkdiag}\{U_{q_1},\cdots,U_{q_s}\}$ preserves $\hat{O}$ with $U_q\hat{O} U^\dagger_q=\hat{O}$, and permutation $\pi$ preserves the diagonal structure of $\hat{\rho}$ and $\hat{O}$,  we obtain that the set in Eq.~(\ref{solution}) is a solution set for the condition~(\ref{condition2}).

Given an arbitrary critical point $U^\prime=Q^\dagger U_q^\dagger \pi U_pP$, it is straightforward to compute
\begin{align}
	F(U^\prime)=\tr[\hat{O}\pi\rho\pi^\dagger]=\sum_{k=1}^D\lambda_{(k)} o_k,
\end{align}
where $\lambda_{(k)}$ and $o_k$ refer to the $k$-th diagonal entry of $\pi \hat{\rho} \pi^\dagger$ and $\hat{O}$, respectively. And the second-order derivative~(\ref{Hess}) in the main text becomes
\begin{align}
	h_{U^\prime}(A)&=\tr[\pi\hat{\rho}\pi^{\dagger}(\hat{A}\hat{O}\hat{A}-\hat{O}\hat{A}^2)]\nonumber\\
	&= {\rm vec}(\hat{A})^\dagger (\hat{O}\otimes \pi\hat{\rho}\pi^\dagger-\mathbb{I}\otimes \pi\hat{\rho}\pi^\dagger \hat{O}){\rm vec}(\hat{A})\nonumber\\
	&\equiv {\rm vec}(\hat{A})^\dagger T {\rm vec}(\hat{A})
\end{align}
with $\hat{A}=U_q\,Q A\,Q^\dagger U_q^\dagger$ and
\begin{align} 
	T&={\rm diag}\{ \lambda_{(1)}(o_{1}-o_1), \lambda_{(2)}(o_{1}-o_2), \cdots,\lambda_{(D)}(o_1-o_D), \nonumber\\
	&\cdots,\lambda_{(1)}(o_{D}-o_{1}),\lambda_{(2)}(o_{D}-o_2),\cdots,\lambda_{(D)}(o_D-o_D)\}. \nonumber
\end{align}
Here all eigenvalues of $O$ in Eq.~(\ref{eigo}) are relabelled as $o_k$ for $k=1,\cdots, D$. Combining matrix $T$ with Eq.~(\ref{Hess2}) in the main text yields	
\begin{align}
	H&=
	\begin{bmatrix}
		D^\top_{\rm sy}T D_{\rm sy} & \boldsymbol{0}\\
		\boldsymbol{0} & D^\top_{\rm ay}TD_{\rm ay}
	\end{bmatrix} \nonumber \\
&=	\begin{bmatrix}
	{\rm diag}\{W_1,W_2,\cdots,W_D\} & \boldsymbol{0}\\
	\boldsymbol{0} & {\rm diag}\{V_1,V_2,\cdots,V_D\}
\end{bmatrix},  \label{Hmatrix}
\end{align}
 with
\begin{align*}
  W_k =&{\rm diag}\{ 0,-(\lambda_{(k)}-\lambda_{(k+1)})(o_k-o_{k+1}), \nonumber \\
 &\cdots,-(\lambda_{(k)}-\lambda_{(D)})(o_{k}-o_D)\}, \\
 V_k =&{\rm diag}\{-(\lambda_{(k)}-\lambda_{(k+1)})(o_k-o_{k+1}), \nonumber \\
 &\cdots,-(\lambda_{(k)}-\lambda_{(D)})(o_{k}-o_D)\}.
\end{align*} 
 Obviously, the eigenvalues of $H$ are given by
\begin{equation}\label{heigen}
	h_{kk'}=-(\lambda_{(k)}-\lambda_{(k')})(o_k-o_{k'}),~ \forall 1\leq k\leq k'\leq D.
\end{equation}

\section{Simultaneous critical points}\label{appendixB}

If states $\rho_m$ are pairwise commutative, i.e., $[\rho_m,\rho_{m'}]=0$ for $m,  m'=1,\cdots,M$, then they can be simultaneously diagonalized into
\begin{align}
	\sigma_m:=\omega_m\,P\rho_m\,P^\dagger={\rm diag}\{\lambda_1^m,\cdots,\lambda_D^m\}, m=1,\cdots, M, \nonumber
\end{align}
by one single unitary transformation $P$. Note that $\lambda^m_k$s are generically not in the descending order as Eq.~(\ref{eigrho}). 

The $k$-th diagonal element of $\sigma_m$ is said to be dominant if $\lambda_{k}^m\geq\lambda_{k}^{m'}$ holds for all $m'\neq m$. An extra assignment rule is imposed that if there happens $\lambda_{k}^m=\lambda_{k}^{m'}$ for two different blocks $\Lambda^m$ and $\Lambda^{m^\prime}$, this element is assigned to block $\Lambda^m$ with $m<m^\prime$. Then, we show that there exists a permutation $R$ transforming each $\sigma_m$ into
\begin{equation}\label{form}
	\tilde{\sigma}_m=R\sigma_m R^\dagger ={\rm blkdiag}\{\underbrace{\Lambda^{m}_{1}}_{d_1},\cdots,\underbrace{\Lambda^m_{i}}_{d_i}, \cdots, \underbrace{\Lambda^m_{M}}_{d_M} \},
\end{equation}
where $\Lambda_{i=m}^m$ contains all dominant entries of $\sigma_m$. 

Noting first that all permutations do not change but reorder entries of all matrices, we can always find a permutation $\pi_1\in\mathcal{P}(D)$ such that 
\begin{equation}
	\pi _1\sigma_1\pi_1^\dagger={\rm blkdiag}\{\underbrace{\Lambda_1^1}_{d_1}, *\},
\end{equation}
where the diagonal block $\Lambda_1^1$ is composed of dominant elements of $\sigma_1$ with dimensionality $d_1$, and another permutation $\pi_2={\rm blkdiag}\{\mathbb{I}_{d_1},\pi \}$, with $\pi \in\mathcal{P}(D-d_1)$, 
\begin{eqnarray}
	\pi_2\pi_1\sigma_2\pi_1^\dagger \pi_2^\dagger ={\rm blkdiag}\{ \underbrace{\Lambda^2_1}_{d_1},\underbrace{\Lambda_2^2}_{d_2},* \}
\end{eqnarray}
with a $d_1$-dimensional diagonal block $\Lambda^2_1$ and the dominant matrix $\Lambda_2^2$ of $\pi_1\sigma_2\pi_1^\dagger$. It is remarked that $\Lambda^2_2$ is also the dominant block of $\sigma_2$, and $\pi_2$ does not alter the dominate block $\Lambda_1^1$ of $\sigma_1$, i.e.,
\begin{align}
	\pi_2\pi_1\sigma_1\pi_1^\dagger \pi_2^\dagger =	\pi_2\,{\rm blkdiag}\{ \Lambda_1^1,* \}\,\pi_2^\dagger=\{ \Lambda_1^1,* \}. \label{invariance}
\end{align}

Applying sequentially permutations $\pi_1,\cdots, \pi_m={\rm blkdiag}\{\mathbb{I}_{\sum_{m^\prime}^{m-1}d_{m^\prime}}, \pi\in \mathcal{P}(D-\sum_{m^\prime}^{m-1}d_{m^\prime}) \}$ to $\sigma_m$ yields
\begin{equation}
	\pi_m\cdots \pi_1 \sigma_m \pi_1^\dagger\cdots\pi^\dagger_m={\rm blkdiag}\{\underbrace{\Lambda_1^m}_{d_1},\underbrace{\Lambda_2^m}_{d_2},\cdots,\underbrace{\Lambda_m^{m}}_{d_m}, * \},
\end{equation}
where $\Lambda_{i}^{m}$ refers to a $d_i$-dimensional diagonal block and $\Lambda_m^{m}$ is element-wise dominant of $\sigma_m$. As a consequence, applying permutation $R:=\pi_M\cdots\pi_1$ to  all $\sigma_m$s and using the invariance property~(\ref{invariance}) gives rise to the desired form in Eq.~(\ref{form}).

Then, all diagonal $\tilde{\sigma}_m$s can be faithfully represented as vectors $\boldsymbol{r}_m=(\lambda^m_{11},\cdots, \lambda^m_{d_Md_M})^\top \in \mathbb{R}^D$, which are arranged into a matrix 
\begin{align}
	G=\begin{bmatrix} \boldsymbol{r}^\top_1\\ \boldsymbol{r}^\top_2 \\\cdot \\\cdot \\\cdot \\ \boldsymbol{r}^\top_M \end{bmatrix}=\begin{bmatrix} \boldsymbol{c}_1, \boldsymbol{c}_2,\cdots, \boldsymbol{c}_D \end{bmatrix} \in \mathbb{R}^{M\times D},
\end{align}
where $\boldsymbol{c}_k\in\mathbb{R}^M$ with the $m$-th element being the $k$-th diagonal entry of $\tilde{\sigma}_m$. It follows from Eq.~(\ref{form}) that if indices $k$ and $k^\prime$ belong to different intervals $\mathcal{I}_I:=[\sum_{i=1}^{I-1}d_{i}+1,\sum_{i=1}^I d_{i}]$, then there must be $\boldsymbol{c}_k\neq \boldsymbol{c}_{k^\prime}$. Further, there exists a permutation $S_\rho={\rm blkdiag}\{ \hat{\pi}_1\in \mathcal{P}(d_1),\cdots,\hat{\pi}_M\in\mathcal{P}(d_M)\}$ such that 
\begin{align}\label{sform}
	\hat{\sigma}_m&=S_\rho \tilde{\sigma}_m S_\rho^\dagger ={\rm blkdiag}\{ \underbrace{\hat{\pi}_1 \Lambda_1^m\hat{\pi}_1^\dagger}_{d_1},\cdots, \underbrace{\hat{\pi}_M \Lambda_M^m\hat{\pi}_M^\dagger}_{d_M} \} \nonumber \\
	&= {\rm blkdiag}\{ \underbrace{\hat{\Lambda}_1^m}_{d_1},\cdots, \underbrace{\hat{\Lambda}_M^m}_{d_M} \},
\end{align}
where $\hat{\Lambda}_m^m:=\hat{\pi}_m\Lambda_m^m\hat{\pi}_m^\dagger$ remains element-wise dominant, and the corresponding matrix becomes
\begin{equation}\label{ssform}
	\hat{G}=[\underbrace{\hat{\boldsymbol{\lambda}}_1,\cdots,\hat{\boldsymbol{\lambda}}_1}_{p_1},\cdots,\underbrace{\hat{\boldsymbol{\lambda}}_r,\cdots,\hat{\boldsymbol{\lambda}}_r}_{p_r}],
\end{equation}
where $\hat{\boldsymbol{\lambda}}_1,\cdots,\hat{\boldsymbol{\lambda}}_r$ are distinct vectors reordered from $\boldsymbol{r}_1, \cdots, \boldsymbol{r}_M$, with multiplicities $p_1,\cdots, p_r$. 

Similarly, if operators $O_m$ are pairwise commutative, then they can be simultaneously transformed into 
\begin{equation}\label{oform}
	\hat{O}_m=S_oR_oQO_mQ^\dagger R_o^\dagger S_o^\dagger={\rm blkdiag}\{ \underbrace{\hat{\Omega}_1^m}_{d^\prime_1},\cdots,\underbrace{\hat{\Omega}_M^m}_{d^\prime_M}  \}
\end{equation} 
under a unitary transformation $Q$, permutation $R_o$, and $S_o$, with the dominant block $\hat{\Omega}_m^m$ of $\hat{O}_m$ for $m=1,\cdots, M$. Correspondingly, there is
\begin{equation}\label{ooform}
	\hat{G}_o=[\underbrace{\hat{\boldsymbol{o}}_1,\cdots,\hat{\boldsymbol{o}}_1}_{q_1},\cdots,\underbrace{\hat{\boldsymbol{o}}_s,\cdots,\hat{\boldsymbol{o}}_s}_{q_s}],
\end{equation}
where $\hat{\boldsymbol{o}}_1,\cdots,\hat{\boldsymbol{o}}_r$ are distinct vectors with multiplicities $q_1,\cdots, q_s$.

Thus, simultaneous critical points are given in
\begin{align} \label{solution2}
	\mathcal{S}=\{\hat{Q}^\dagger U^\dagger_q\pi U_p \hat{P}: U_q\in \mathcal{U}(\boldsymbol{q}), U_p\in \mathcal{U}(\boldsymbol{p}), \pi\in \mathcal{P}\}.
\end{align}
Here $\hat{P}=S_\rho R_\rho P$, $\hat{Q}=S_oR_o Q$, $\mathcal{U}(\boldsymbol{q})$ is the product group $\mathcal{U}(q_1)\times\cdots\times\mathcal{U}(q_s)$, $\mathcal{U}(\boldsymbol{p})=\mathcal{U}(p_1)\times\cdots\times\mathcal{U}(p_r)$, and $\mathcal{P}$ is the permutation group. 

For any simultaneous critical point $U^\prime=\hat{Q}^\dagger U^\dagger_q\pi U_p \hat{P}$, we compute
\begin{equation}\label{afu}
	F(U^\prime)=\sum_{m=1}^M \tr(\pi\hat{\sigma}_m\pi^\dagger\hat{O}_m)=\sum_{m=1}^M\sum_{k=1}^D\lambda_{(k)}^mo^m_k,
\end{equation}
and
\begin{align}\label{D13}
	&h_{U^\prime}(A)=\sum_{m=1}^M \tr[\pi\hat{\sigma}_m\pi^\dagger(\hat{A}\hat{O}_m\hat{A}-\hat{O}_m\hat{A}^2)]\nonumber\\
	=&\sum_{m=1}^M\sum_{1\leq k<k'\leq D} -(\hat{\lambda}_{(k)}^m-\hat{\lambda}_{(k')}^m)(\hat{o}_{k}^m-\hat{o}_{k'}^m)(\hat{x}_{kk'}^2+\hat{y}_{kk'}^2)\nonumber\\
	=& \sum_{1\leq k<k'\leq D}-\langle \hat{\boldsymbol{\lambda}}_{(k)}- \hat{\boldsymbol{\lambda}}_{(k')}, \hat{\boldsymbol{o}}_k- \hat{\boldsymbol{o}}_{k'}\rangle (\hat{x}_{kk'}^2+\hat{y}_{kk'}^2)\nonumber\\
\triangleq& \sum_{1\leq k <k'\leq D} h_{kk'}(\hat{x}_{kk'}^2+\hat{y}_{kk'}^2).
\end{align}
Here, $\hat{\lambda}_{(k)}^m, \hat{o}_k^m$ are the $k$-th diagonal entries of $\pi\hat{\sigma}_m\pi^\dagger, \hat{O}_m$, respectively, $\hat{x}_{ij}, \hat{y}_{ij}$ are the real and imaginary parts of the $(i,j)$ element of $\hat{A}=U_q\,\hat{Q} A\,\hat{Q}^\dagger U_q^\dagger$, $\hat{\boldsymbol{\lambda}}_{(k)}$ is the $k$-th column of the compact matrix corresponding to $\hat{\sigma}_m$s, and the columns of $\hat{G}_o$ are relabelled as $\hat{\boldsymbol{o}}_k$ for $k=1,\cdots, D$. $\langle \cdot, \cdot\rangle$ denotes the inner product of two vectors.

It follows above that
\begin{align}
h_{kk'}=\sum_{m=1}^M-(\hat{\lambda}_{(k)}^m-\hat{\lambda}_{(k')}^m)(\hat{o}_{k}^m-\hat{o}_{k'}^m), \label{hkk}
\end{align}
generalizes Eq.~(\ref{heigen}) for $M=1$ to $M\geq 1$. Since $\tilde{A}$ runs over all possible Hermitian matrices, $h_{U^\prime}$ is always nonnegative (nonpositive) if and only if all $h_{kk'}$s are nonnegative (nonpositive), and equivalently, $U^\prime$ is a local minimum (maximum) if and only if all $h_{kk'}$ are nonnegative (nonpositive).

\section{Proofs of Theorem~\ref{theo3} and Corollary~\ref{corollary2}}\label{appendixC}

\subsection{Proof of Theorem~\ref{theo3}}\label{appendixC1}

\textit{Necessity}- Suppose that $U^\prime=\tilde{Q}^\dagger U^\dagger_q\pi^\prime U_p \tilde{P}$ is a local but not global maximum of $F(U)$. It follows above that the local maximum property of $U^\prime$ requires
\begin{align}
	h_{kk^\prime} \leq 0,~~\forall~~  k, k^\prime=1,\cdots, D.
\end{align}
By contradiction, $U^\prime$ is assumed to permute elements between two distinct blocks $\hat{\Lambda}_i^m$ and $\hat{\Lambda}_j^m$ in each $\sigma_m$. That is, $\pi^\prime$ rearranges $\hat{\lambda}^{m}_{i_1}$ from $\hat{\Lambda}_i^m$ as an element $\hat{\lambda}^{m}_{(j_1)}$ in $\hat{\Lambda}_j^m$ and $\lambda^{m}_{j_2}$ from $\hat{\Lambda}_j^m$ as $\lambda^{m}_{(i_2)}$ in $\hat{\Lambda}_i^m$. Consequently, we obtain
\begin{eqnarray*}
	h_{i_2j_1}&=&-\langle \hat{\boldsymbol{\lambda}}_{(i_2)}- \hat{\boldsymbol{\lambda}}_{(j_1)},\hat{\boldsymbol{o}}_{i_2}- \hat{\boldsymbol{o}}_{j_1}\rangle\\
	&=&-(\hat{\lambda}^{i}_{(i_2)}-\hat{\lambda}^{i}_{(j_1)})+(\hat{\lambda}^{j}_{(i_2)}-\hat{\lambda}^{j}_{(j_1)})\\
	&=& -(\hat{\lambda}^{i}_{j_2}-\hat{\lambda}^{i}_{i_1})+(\hat{\lambda}^{j}_{j_2}-\hat{\lambda}^{j}_{i_1}) \\
	&>& 0.
\end{eqnarray*}
The second equality follows from that $\hat{\lambda}_{i_1}^i\in\hat{\Lambda}_i^i$ and $\hat{\lambda}_{j_2}^j\in\hat{\Lambda}_j^j$ and the projective measurement $\{O_m\}$ is optimal, and the inequality from
$\lambda^{i}_{i_1}\geq \lambda^{j}_{i_1}$ and $\lambda^{j}_{j_2}> \lambda^{i}_{j_2}$. This contradicts the assumption that $U^\prime$ is a local maximum.

 Again by contradiction, $U^\prime$ is assumed to only reorder diagonal entries within each block $\Lambda_i^m$ in each $\sigma_m$, for $i, m=1,\cdots,M$. Then, we have
\begin{equation}
	F(U^\prime)=\sum_{m=1}^M \tr(\pi\hat{\sigma}_m\pi^\dagger\hat{O}_m)=\sum_{m=1}^M \tr(\Lambda_{m}^m)
\end{equation}
which achieves the maximal value and thus is globally optimal. This contradicts the assumption that $U^\prime$ is a FT. Thus, we complete the proof that the unidirectional element exchanges within $\hat{\sigma}$ induced by $U^\prime$ must happen among at least three blocks.

\textit{Sufficiency}- Assume that a local maximum $U^\prime=\tilde{Q}^\dagger U^\dagger_q\pi U_p \tilde{P}$ induces that diagonal elements undergo unidirectional exchange among blocks $\hat{\Lambda}_{m_i}^m$, $i=1,\cdots,s\geq 3$. Thus, we are able to obtain
\begin{align}\label{E2}
	F(U^\prime)&=\sum_{m=1}^M \sum_{k\in\mathcal{I}_m} \hat{\lambda}_{(k)}^m =\sum_{m=1}^M \sum_{k\in \mathcal{I}_m}\hat{\lambda}_k^{\hat{m}_k}\nonumber\\
	&\leq \sum_{m=1}^M \sum_{k\in\mathcal{I}_{m}}\hat{\lambda}_k^m= F^*,
\end{align}
where $\hat{m}_k$ refers to the index of $\mathcal{I}_{m'}$ which $\hat{k}$ belongs to (i.e., $\hat{k}\in\mathcal{I}_{\hat{m}_k}$). The inequality follows from the fact that for $k\in\mathcal{I}_m$, $\lambda_k^m\in\Lambda_m^m$ is dominant, and strictly holds that there must be $\lambda_k^{\hat{m}_k}<\lambda_k^{m_s}$ for at least one $k\in \mathcal{I}_{m_s}$. This proves that $U^\prime$ forms a FT.

\subsection{Proof of Corollary~\ref{corollary2}}\label{appendixC2}

\textit{Sufficiency}- If the condition~(\ref{cyclic}) in Corollary~\ref{corollary2} is satisfied, then one can verify that any simultaneous critical point $U^\prime=\hat{Q}^\dagger U^\dagger_q\pi U_p \hat{P}$ with $\pi=\pi(k_1,k_3)\pi(k_1,k_2)$  is a locally but not globally maximum point.

\textit{Necessity}- It follows directly from Theorem~\ref{theo3} that if FTs exist, there exists a loop $(\overrightarrow{m_1,m_2,m_3})$ such that at least one diagonal element of $\Lambda_{i}^m$ is permuted to some position associated with the block $\Lambda_{i+1}^m$. Furthermore, the local optimality of simultaneous critical points requires
\begin{eqnarray*}
	h_{kk'}=\begin{cases}
		0 & k,k'\in\mathcal{I}_m\\
		\hat{\lambda}^{m'}_{(k)}-\hat{\lambda}^{m}_{(k)}+\hat{\lambda}^{m}_{(k')}-\hat{\lambda}^{m'}_{(k')}\leq 0 & k\in\mathcal{I}_m, k'\in\mathcal{I}_{m'}
	\end{cases},
\end{eqnarray*}
and thus the condition~(\ref{cyclic}) is obtained as desired.

\section{Proof of Theorem~\ref{theo4}}\label{appendixD}

If both states and operators are perfectly distinguishable, there are $\hat{\Lambda}_{m'}^m=\boldsymbol{0}_{d_{m'}}$ and $\hat{\Omega}_{m'}^m=\boldsymbol{0}_{d^\prime_{m'}}$ for any $m'\neq m$. Without loss of generality, the diagonal entries of $\hat{\Omega}_m^m$ are supposed to be in descending order; Otherwise, one can always transform $\hat{O}_m$ into this form by using some permutation ${\rm blkdiag}\{\pi_1\in\mathcal{P}(d^\prime_1),\cdots, \pi_M\in\mathcal{P}(d^\prime_M) \}$. In the following, we prove that for any given monotonically increasing function $f_m$, the landscape is devoid of FTs formed by simultaneous critical points under the distinguishability condition.

\textit{Proof}-
It equals that any simultaneous local maximum must be globally maximal. For any $U^\prime=\hat{Q}^\dagger U_q^\dagger \pi U_p \hat{P}\in\mathcal{S}$ satisfying
\begin{equation}
	\pi \hat{\rho}_m\pi^\dagger={\rm diag}\{ \lambda_{(1)}^m,\cdots,\lambda_{(D)}^m  \}\triangleq{\rm blkdiag}\{\underbrace{\Gamma_{1}^m}_{d^\prime_1},\cdots,\underbrace{\Gamma_{M}^m}_{d^\prime_M} \},
\end{equation}
 its type is determined by signs of the coefficients 
\begin{eqnarray*}\label{eigen}
	h_{kk'}&=&-\langle \hat{\boldsymbol{\lambda}}_{(k)}-\hat{\boldsymbol{\lambda}}_{(k')},\hat{\boldsymbol{o}}_k-\hat{\boldsymbol{o}}_{k'}\rangle\nonumber
	\\
	&=&-\omega_{m_1}(U^\prime)(\lambda^{m_1}_{(k)}-\lambda^{m_1}_{(k')})(o^{m_1}_k-o^{m_1}_{k'})-\omega_{m_2}(U^\prime)\cdot\\
	&&(\lambda^{m_2}_{(k)}-\lambda^{m_2}_{(k')})(o^{m_2}_k-o^{m_2}_{k'}),
\end{eqnarray*}
for $1\leq k<k'\leq D$, where $m_1$ and $m_2$ are subscript numbers of intervals $\mathcal{I}_m$ in which $i$ and $j$ belong to, respectively. If $U^\prime$ is a local maximum point, i.e., $h_{kk'}\leq0$, there must be $\lambda^{m_1}_{(k)}\geq\lambda^{m_1}_{(k')}$ and $\lambda^{m_2}_{(k)}\leq\lambda^{m_2}_{(k')}$, due to $\lambda^{m}_{(i)}\lambda^{m'}_{(i)}=o^m_io^{m'}_i=0$ for any $m\neq m'$ and $i=1,\cdots,D$. This indicates that diagonal entries of the block $\Gamma_{m}^{m}$ are the first $d^\prime_m$ largest eigenvalues of $\hat{\rho}_m$ and arranged in descending order, leading to a globally maximal landscape value $F_{{\rm max}}=\sum_{m=1}^M f_m\left(\sum_{s=1}^{d^\prime_m}\lambda_{ms}\right)$, where $\lambda_{ms}$ denotes the $s$-th largest eigenvalue of $\hat{\rho}_m$. Thus, the local maximum point is globally maximal. Similarly, if $U^\prime$ is a local minimum point, it follows from $h_{kk'}\geq0$ that the diagonal entries of $\Gamma_m^{m}$ are the first $d^\prime_m$ smallest eigenvalues of $\hat{\rho}_m$ and arranged in increasing order. This immediately implies that the local minimum point corresponds to a globally minimal landscape value, and hence is globally minimal.

\section{Numerical simulations for the perfect distinguishability case}\label{appendixE}

In the case that both states $\{\rho_m\}$ and operators $\{O_m\}$ are perfectly distinguishable, it has been proven that there exist no false traps formed by simultaneous critical points. As non-simultaneous critical points are widely spread over the landscape, it is necessary to further explore whether spuriously non-simultaneous optima can exist. To this end, we perform large-scale numerical simulations, and empirical evidence suggests that FTs are likely absent or at least rarely emerge.

\begin{figure}
	\begin{center}
		\includegraphics[width= \columnwidth]{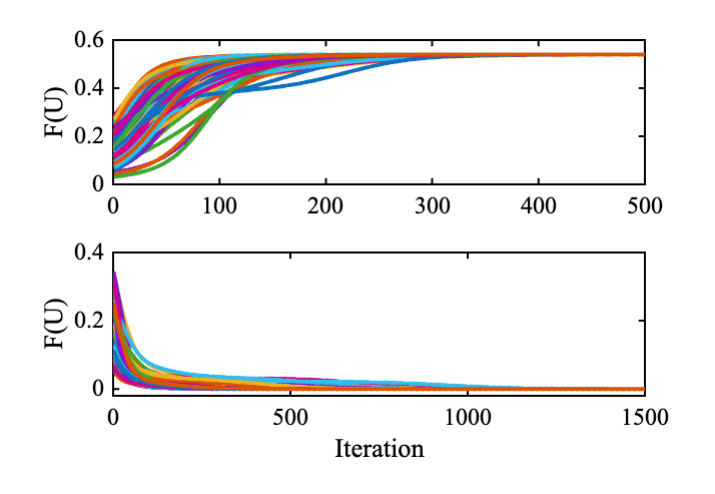}
	\end{center}
	\caption{The numerical simulation of gradient-ascent and gradient-descent searches for the example in Appendix~\ref{appendixE}. 100 independent simulations are performed with randomly generated initial seeds, and the results demonstrate that the maximal or minimal landscape values can be always achieved.}
	\label{fig4} 
\end{figure}

To clarify our findings, we here present an explicit example, in which $p_1=p_2=1/4, p_3=1/2$, and
\begin{align*}
	\rho_1&={\rm diag}\{1, 0, 0,0  \},~	O_1={\rm diag}\{0.8,0.2, 0, 0  \},\\
	\rho_2&={\rm diag}\{ 0, 1,0,0\},~O_2={\rm diag}\{0, 0, 0.4, 0\},\\
	\rho_3&={\rm diag}\{0, 0,0.8,0.2 \},~O_{3}={\rm diag}\{ 0,0,0,0.6 \}.
\end{align*}
One can easily verify that the minimal and maximal landscape values of $F(U)$ are $F_{\min}=0$ and $F_{\max}=0.54$, respectively. Fig.~\ref{fig4} shows that some global maximum or minimum point can be always achieved via the gradient-ascent or gradient-descent method, suggesting the absence of false traps. Moreover, using the Matlab Function `fsolve', we solve the critical-point condition with randomly generated initial seeds, and categorize the obtained critical points by their corresponding landscape value and characteristics (e.g., the type and simultaneity). As shown in Fig.~\ref{fig5}, the critical points derived from $10^5$ seeds yield distinct landscape values distributed between $0$ and $0.54$, and the vast majority of them are identified as saddle points. All of the found non-simultaneous critical points are identified as saddles. These results indicate that FTs are absent, or at least exceedingly rare, on the landscape. 

\begin{figure}
	\begin{center}
		\includegraphics[width= \columnwidth]{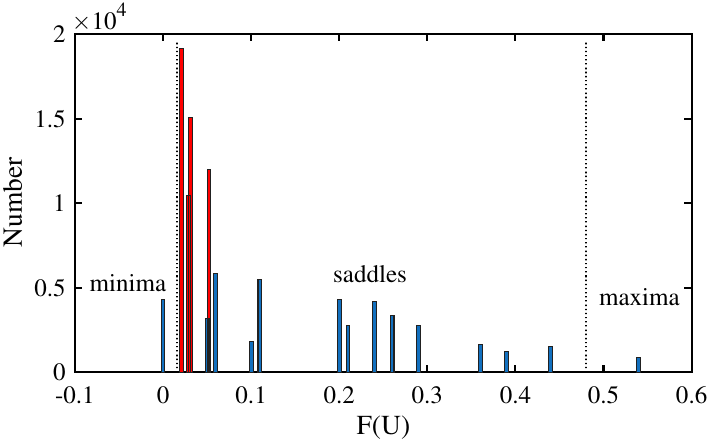}
	\end{center}
	\caption{Distribution of critical points categorized  according to their corresponding landscape values. The colourful bars indicates whether the critical points are simultaneous, with `blue' representing simultaneous and 'red' non-simultaneous. }
	\label{fig5} 
\end{figure}

\bibliography{landscape}

\end{document}